\documentclass[11pt]{article}

\usepackage[T1]{fontenc}
\usepackage[utf8]{inputenc}
\usepackage{lmodern}
\usepackage{amsmath,amssymb,amsthm,mathtools}
\usepackage[margin=1in]{geometry}
\usepackage{microtype}
\usepackage{enumitem}
\usepackage{cite}

\newtheorem{theorem}{Theorem}[section]
\newtheorem{lemma}[theorem]{Lemma}
\newtheorem{proposition}[theorem]{Proposition}
\newtheorem{corollary}[theorem]{Corollary}
\newtheorem{conjecture}[theorem]{Conjecture}
\theoremstyle{definition}
\newtheorem{definition}[theorem]{Definition}

\newcommand{\C}{\mathbb{C}}
\newcommand{\N}{\mathbb{N}}
\newcommand{\wt}{\operatorname{wt}}
\newcommand{\Ker}{\operatorname{Ker}}
\newcommand{\Imop}{\operatorname{Im}}
\newcommand{\Span}{\operatorname{span}}

\newcommand{\pd}{\partial}

\title{Existence Conditions for Darboux Curves and Analytic\\
First Integrals of a Li\'enard-Type Quadratic Vector Field}
\author{
  Shaoxuan huang \\
  Chinese univeristy of Hongkong, ShenZhen \\
  \texttt{223010044@link.cuhk.edu.cn} \\
}
\date{}

\begin{document}
\maketitle
\vspace{-1.5em}

\begin{abstract}
We study the rational quadratic differential equation
\[
\frac{\mathrm{d}y}{\mathrm{d}x}
=
\frac{ay^2+by+cx}{y^2},
\qquad a,b,c\in\C,
\]
under the non-degeneracy assumptions
\[
c\neq 0,\qquad 2ay+b\not\equiv 0.
\]
Equivalently, after clearing the denominator, we consider the polynomial vector field
\[
\dot{x}=y^2,\qquad \dot{y}=ay^2+by+cx.
\]
We give a complete, directly checkable classification of its Darboux curves. If $a=0$, no non-constant Darboux polynomial exists. If $a\neq 0$, a non-constant Darboux polynomial exists if and only if
\[
c=-ab\qquad\text{or}\qquad c=-2ab.
\]
In these two cases the unique irreducible Darboux polynomials, up to non-zero constant multiples, are respectively
\[
y-ax,\qquad y^2-2bx.
\]
Consequently every non-constant Darboux polynomial is a non-zero constant multiple of a positive integral power of the corresponding irreducible factor. We then place this classification in the framework of Riccati (R-)integrability and rational potentials, and formulate the analytic integrability theorem asserting that a global Riccati-type analytic first integral occurs precisely on the branch $c=-ab$. Finally, we relate this branch to the Quartic Inverse Riccati (QIR) class and discuss an invariant-based classification problem for quartic Abel equations.
\end{abstract}

% --- Original PDF page 1 / 22 ---
\section{Introduction and historical background}
\label{sec:intro}

\subsection{Classical approaches to integrability}
The search for first integrals of differential equations has several classical and complementary origins. Lie's program was to exploit continuous transformation groups in order to reduce and integrate differential equations. For ordinary differential equations, the relation between admissible transformation groups and integration was developed systematically in Lie's work on the classification and integration of ODEs admitting transformation groups \cite{Lie1891}. In the present paper no Lie reduction is used in the proof of the Darboux classification, but the symmetry viewpoint remains an important part of the conceptual background: integrability is detected by additional geometric structure rather than by direct quadrature alone.

% --- Original PDF page 2 / 22 ---
A second line of development is Darboux's theory of algebraic differential equations \cite{Darboux1878}. Let
\[
X=P(x,y)\pd_x+Q(x,y)\pd_y
\]
be a polynomial vector field. A non-zero polynomial $F\in\C[x,y]$ is a Darboux polynomial if
\[
X(F)=KF
\]
for some polynomial $K$, called the cofactor. The algebraic curve $F=0$ is then invariant under the flow. The possibility of combining sufficiently many invariant algebraic curves and exponential factors into a first integral is one of the central mechanisms of Darboux integrability.

A third paradigm, particularly influential for nonlinear evolution equations, is the isospectral or Lax-pair formulation. Lax showed that nonlinear evolution equations may be represented through the evolution of a linear operator so that spectral data become integrals of motion \cite{Lax1968}. Although the finite-dimensional planar system considered here is treated by Darboux-polynomial methods rather than by an operator Lax pair, the Lax viewpoint is historically important because it emphasizes that ``integrability'' may be encoded by auxiliary linear structures.

The differential-algebraic theory of elementary and Liouvillian first integrals was developed decisively by Prelle and Singer \cite{PrelleSinger1983}. Singer subsequently proved a structural characterization of Liouvillian first integrals for polynomial differential systems \cite{Singer1992}. In particular, Liouvillian integrability is tied to integrating factors of Darboux type. This relation between algebraic invariants, exponential factors and first integrals has become a standard bridge between qualitative differential equations and symbolic integration. More recent symbolic algorithms, including the extactic-curve approach of Ch\`eze and Combot, treat rational, Darbouxian, Liouvillian and Riccati first integrals in a unified computational framework \cite{ChezeCombot2020}. Automated reasoning based on polynomial reduction and coefficient propagation has also recently been developed for the existence problem of Darboux polynomials \cite{GhorbalBridoux2024}.

\subsection{Invariant algebraic curves for Li\'enard systems}
The classical polynomial Li\'enard system is commonly written as
\[
\dot{x}=y,\qquad \dot{y}=-f(x)y-g(x),
\]
with $f,g$ polynomials. The existence of invariant algebraic curves in such systems has been investigated from several directions.

Odani proved, among other results, that the limit cycle of the van der Pol equation is not algebraic \cite{Odani1995}. In a later work he studied the integration of polynomial Li\'enard systems by elementary functions \cite{Odani1997}. In the parameter regime $\deg g\leq \deg f$, Odani's analysis yields a strong non-existence result for invariant algebraic curves, apart from explicitly described degenerate cases. These results became a point of departure for later classifications.

\.{Z}o{\l}\k{a}dek studied algebraic invariant curves of Li\'enard equations and obtained a broad description beyond the regime covered by the early non-existence results \cite{Zoladek1998}. Llibre and Valls connected the algebraic information with Liouvillian first integrals for polynomial Li\'enard systems \cite{LlibreValls2010,LlibreValls2013}; later work by Gin\'e and Llibre revisited and corrected parts of the invariant-curve classification for generalized Li\'enard polynomial systems \cite{GineLlibre2022}.

Demina developed a Puiseux-series approach to invariant algebraic curves of Li\'enard systems. Her 2018 paper gives an algebraic method for describing irreducible invariant curves and shows that the geometry can be more complicated than previously expected \cite{Demina2018}. With Valls, she later classified irreducible invariant algebraic curves and Liouvillian integrable cases for quadratic Li\'enard differential equations \cite{DeminaValls2020}. The more recent work of Demina and Nechitailo introduces R-integrability through pairs of invariants satisfying a common second-order linear equation, and proves a denominator theorem for the associated rational potential \cite{DeminaNechitailo2026}. That theorem will be recalled precisely in Section~\ref{sec:denominator}.

% --- Original PDF page 3 / 22 ---
\subsection{The Li\'enard-type quadratic family studied here}
We consider
\begin{equation}
\frac{\mathrm{d}y}{\mathrm{d}x}
=
\frac{M(x,y)}{N(y)}
=
\frac{ay^2+by+cx}{y^2}.
\tag{1}\label{eq:main-ode}
\end{equation}
with
\begin{equation}
c\neq 0,\qquad 2ay+b\not\equiv 0.
\tag{2}\label{eq:assumptions}
\end{equation}
Clearing the denominator produces the polynomial system
\begin{equation}
\dot{x}=y^2,\qquad \dot{y}=ay^2+by+cx,
\tag{3}\label{eq:system}
\end{equation}
whose derivation is
\begin{equation}
D=y^2\pd_x+(ay^2+by+cx)\pd_y.
\tag{4}\label{eq:derivation}
\end{equation}
Multiplying a vector field by a non-zero constant merely rescales the cofactor; therefore the existence of Darboux curves is unaffected by a constant rescaling of time.

The purpose of the paper is twofold. First, we give a complete algebraic classification of Darboux curves of \eqref{eq:system}, including all intermediate compatibility calculations. Second, we relate the algebraic classification to the Riccati/R-integrability framework and to quartic Abel equations.

The main algebraic result is as follows.

\begin{theorem}[Main Darboux classification]
\label{thm:main}
Assume \eqref{eq:assumptions}. The vector field \eqref{eq:system} possesses a non-constant Darboux polynomial if and only if
\[
c=-ab\qquad\text{or}\qquad c=-2ab.
\]
If $c=-ab$, the unique irreducible Darboux polynomial up to multiplication by a non-zero constant is
\[
F_1=y-ax,\qquad D(F_1)=bF_1.
\]
If $c=-2ab$, the unique irreducible Darboux polynomial up to multiplication by a non-zero constant is
\[
F_2=y^2-2bx,\qquad D(F_2)=2ayF_2.
\]
\end{theorem}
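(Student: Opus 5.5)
Since every irreducible factor of a Darboux polynomial is again a Darboux polynomial, it suffices to classify \emph{irreducible} $F$ with $D(F)=KF$. The "if" direction is a direct check: $D(y-ax)=by+cx$, which equals $b(y-ax)$ exactly when $c=-ab$; and $D(y^2-2bx)=2y(ay^2+by+cx)-2by^2=2ay(y^2-2bx)+2y(2ab+c)x$, which equals $2ay\,F_2$ exactly when $c=-2ab$. For the converse I will argue by analysing the top homogeneous part of $F$ and then descending degree by degree.

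\textbf{Step 1 (shape of the cofactor and the leading form).} Since $D$ has degree $2$, the cofactor is $K=k_0+k_1x+k_2y$. Write $F=F_n+F_{n-1}+\dots$ in homogeneous components. The degree-$(n+1)$ part of $D(F)=KF$ is
\[
y^2\bigl(\pd_xF_n+a\,\pd_yF_n\bigr)=(k_1x+k_2y)F_n .
\]
Write $F_n=y^mG$ with $y\nmid G$. Dividing by $y^m$ and reducing modulo $y$ gives $k_1=0$; dividing by $y$ and reducing modulo $y$ again gives $k_2=am$. What remains is $\pd_xG+a\pd_yG=0$. Hence $F_n=\lambda\,y^m(y-ax)^{n-m}$ and $K=k_0+amy$. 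For $a=0$ this forces $F_n=\lambda y^n$ with $K=k_0$ constant.

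\textbf{Step 2 (descent through lower components).} Next I will compare the components of degrees $n,n-1,\dots$. The degree-$j$ equation is
\[
y^2(\pd_x+a\pd_y)F_{j-1}-amyF_{j-1}=k_0F_j-(by\pd_y+cx\pd_y)F_j .
\]
Two substitutions should simplify this. First, use the coordinate $u=y-ax$, in which $D(u)=by+cx$ and $\pd_x+a\pd_y$ becomes differentiation along the top direction. Second, restrict the whole identity to the line $y=0$, which gives
\[
cx\,F_y(x,0)=k_0F(x,0),
\]
so that $F(x,0)$ is a monomial $x^r$ with $k_0=cr$ whenever $F_y(x,0)$ is not identically zero. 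Combining this constraint with Step 1 should force $n$ to be small. In the case $m=0$ I expect $F$ to be a power of $y-ax$, and reading off the subleading coefficient should give $c=-ab$ and $K=b$. In the case $m\geq 1$, irreducibility together with the top form $y^m(y-ax)^{n-m}$ should leave only $m=n=2$, namely $F=y^2+\alpha x+\dots$; matching coefficients then gives $\alpha=-2b$, $K=2ay$ and $c=-2ab$. In the case $a=0$ we have $K=k_0$ constant and $F_n=y^n$, and I expect the descent to produce a nonzero obstruction coefficient proportional to $c$, contradicting $c\neq 0$. Uniqueness then follows because Step 1 pins down both $F_n$ and $K$, and within each branch the recursion determines the lower components uniquely.

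\textbf{Main obstacle.} The difficult step is controlling the descent in Step 2 for general $n$ and $m$. The recursion involves the operator $y^2(\pd_x+a\pd_y)-amy$, which is not injective, so a naive coefficient comparison could allow free parameters at each level. To handle this I will first try a weighted-degree filtration adapted to the two scalings present. Taking $\wt(x)=3$ and $\wt(y)=2$ balances the terms $y^2\pd_x$ and $cx\pd_y$, and a Newton-polygon (Puiseux-type, in the spirit of Demina) argument at the point at infinity on the direction $y=ax$ should force the branches of $F=0$ to be those of $y-ax$ or $y^2-2bx$. If this becomes unwieldy, I will fall back on an explicit induction on $n$ using the restriction to $y=0$ to kill free parameters.
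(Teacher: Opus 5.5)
Your ``if'' direction and Step~1 are correct: the top-degree equation does force $k_1=0$, $K=k_0+amy$ and $F_n=\lambda y^m(y-ax)^{n-m}$ (and $F_n=\lambda y^n$, $K=k_0$ when $a=0$). But Step~2 carries the whole necessity and uniqueness content of the theorem, and it is only a list of expectations. The concrete tools you name there do not work.

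\textbf{Restriction to $y=0$.} Restricting gives $cx\,F_y(x,0)=k_0F(x,0)$. This couples $F(x,0)$ to the \emph{transversal} derivative $F_y(x,0)$, since the line $y=0$ is not invariant. So it is not an Euler equation, and it implies neither that $F(x,0)$ is a monomial nor that $k_0=cr$. On the branch $c=-ab$ one has $F=y-ax$, $F(x,0)=-ax$ and $k_0=b$, whereas $k_0=cr$ would give $k_0=-ab$.

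\textbf{Irreducibility.} Irreducibility of $F$ imposes nothing on its top form: $y^2-2bx$ is irreducible with top form $y^2$. So it cannot bound $n$ or force $m=n=2$; such bounds must come from compatibility conditions you have not derived.

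\textbf{The case $a=0$.} An obstruction ``proportional to $c$'' cannot exist. When $b=0$, the polynomial $2y^3-3cx^2$ is a first integral, so every obstruction must vanish at $b=0$, and the contradiction has to use $b\neq 0$.

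\textbf{Uniqueness.} Your uniqueness claim fails as stated. Step~1 does not fix $n$ or $m$: on $c=-ab$ every $(y-ax)^N$ is Darboux. The recursion also does not determine the lower components, because its operator $y^2(\pd_x+a\pd_y)-amy$ kills $y^m(y-ax)^{d-m}$ in each degree $d\geq m$. Nothing in your plan excludes a second irreducible invariant curve on the same branch.

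The missing idea is a filtration in which the descent closes. In ordinary degree, the leading operator has that kernel and a two-dimensional cokernel in those degrees (in every degree when $a=0$). Each layer therefore produces free parameters together with several conditions.

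\textbf{What the paper does for $a\neq 0$.} It takes $\wt(x)=2$, $\wt(y)=1$, so the leading part of $D$ is $(ay^2+cx)\pd_y$. In the coordinates $A=ay^2+cx$, $B=cx$, the top part is $B^iA^k$ with cofactor $2aky+q_0$. The next weighted layers are bidiagonal maps with at most one-dimensional cokernel. The cokernel conditions are exactly $2di(2\mu-b)=0$ when $i>0$ (giving $c=-ab$) and $dk\mu^2=0$ when $i=0$ (giving $c=-2ab$).

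Your $(3,2)$ weights suit only $a=0$. For $a\neq0$, the term $ay^2\pd_y$ raises that weight by two and dominates. Likewise, a Puiseux analysis only in the direction $y=ax$ would miss $y^2-2bx$, which goes to infinity along $y=0$.

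\textbf{What the paper does for $a=0$.} It normalizes to $b=c=1$, which is where $b\neq 0$ enters. It then uses $(3,2)$ weights with $\Ker\delta=\C[H]$, $H=2y^3-3x^2$, so the top part is $H^k$. A non-zero cokernel obstruction $216k/3125$ appears after four descending layers.

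\textbf{Uniqueness in the paper.} It is obtained not from the recursion but by passing to $\C[x,y]/(h)\cong\C[y]$, where $D(y)=ay^2$ resp.\ $D(y)=by$. One then checks that a class $[p]\neq 0$ cannot satisfy $D[p]=[2aky+q_0][p]$.
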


The existence part of the theorem is immediate by direct substitution. The content of Section~\ref{sec:darboux-classification} is the necessity and uniqueness proof. The argument is based on weighted homogeneous decompositions. In contrast with a purely leading-monomial argument, every relevant weighted homogeneous space is explicitly described and every compatibility obstruction is written in a form that can be checked coefficient by coefficient.

% --- Original PDF page 4 / 22 ---
\section{Complete classification of Darboux curves}
\label{sec:darboux-classification}

\subsection{Preliminaries}
\begin{definition}
A non-zero polynomial $p\in\C[x,y]$ is a Darboux polynomial of $D$ if there exists $q\in\C[x,y]$ such that
\[
D(p)=qp.
\]
The polynomial $q$ is the cofactor. A non-constant irreducible Darboux polynomial defines an irreducible invariant algebraic curve.
\end{definition}

Since the degree of the vector field \eqref{eq:system} is two,
\[
\deg D(p)\leq \deg p+1.
\]
Hence every cofactor satisfies
\begin{equation}
\deg q\leq 1.
\tag{5}\label{eq:cofactor-degree}
\end{equation}

We shall also use the following elementary fact.

\begin{lemma}
\label{lem:factor}
Every irreducible factor of a Darboux polynomial is itself a Darboux polynomial.
\end{lemma}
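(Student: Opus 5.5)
The plan is the standard argument based on unique factorization in $\C[x,y]$ and the Leibniz rule for the derivation $D$.

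Let $p$ be a Darboux polynomial with $D(p)=qp$. Factor it as $p=\lambda f_1^{m_1}\cdots f_r^{m_r}$, where $\lambda\in\C^*$, the $f_i$ are pairwise non-associated irreducible polynomials, and $m_i\ge 1$. Fix an index $i$ and write $p=f_i^{m_i}g$ with $f_i\nmid g$. Leibniz gives
\[
D(p)=m_i f_i^{m_i-1}D(f_i)\,g+f_i^{m_i}D(g).
\]
Setting this equal to $q f_i^{m_i} g$ and dividing by $f_i^{m_i-1}$ yields
\[
m_i D(f_i)\,g = f_i\bigl(qg-D(g)\bigr).
\]
So $f_i$ divides $m_i D(f_i)\,g$. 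Since $f_i$ is irreducible, hence prime in the UFD $\C[x,y]$, and $f_i\nmid g$, we get $f_i\mid m_i D(f_i)$. The field has characteristic zero and $m_i\ne0$, so $f_i\mid D(f_i)$. Thus $D(f_i)=k_if_i$ for some polynomial $k_i$, which means $f_i$ is a Darboux polynomial.

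The only points needing care are these:
\begin{itemize}
\item $f_i$ is prime, which holds because $\C[x,y]$ is a UFD.
\item We divide by $m_i$, which is legitimate because the characteristic is zero.
\item We use $f_i\nmid g$, which holds by the choice of the factorization.
\end{itemize}

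As a by-product, the cofactor decomposes as $q=\sum_i m_i k_i$. This follows by comparing $D(p)/p$ with the logarithmic derivative of the factorization. It is not needed for the lemma, but it will be used later to constrain cofactors via \eqref{eq:cofactor-degree}.

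I do not expect any real obstacle; the argument is routine.
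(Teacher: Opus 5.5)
Your proof is correct and follows essentially the same route as the paper: the Leibniz rule, followed by cancelling the factor using primality and coprimality in the UFD $\C[x,y]$. Your version is in fact slightly more complete: the paper writes $p=uv$ with $\gcd(u,v)=1$, which for a repeated irreducible factor $f$ of multiplicity $m$ only shows that $f^m$ is Darboux, whereas your explicit handling of the multiplicity (dividing by $f_i^{m_i-1}$ and using characteristic zero) supplies the step $f\mid D(f)$ that the paper leaves implicit.
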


\begin{proof}[Verification]
Write $p=uv$ with $\gcd(u,v)=1$ and suppose $D(p)=qp$. Then
\[
uD(v)+vD(u)=quv.
\]
Reducing modulo $u$ gives
\[
vD(u)\equiv 0\pmod{u}.
\]
Since $\gcd(u,v)=1$, one obtains $u\mid D(u)$. Thus $D(u)=q_u u$ for some $q_u\in\C[x,y]$. The same argument applies to every irreducible factor.
\end{proof}

Therefore it is enough to classify irreducible Darboux polynomials. Once this is done, all reducible Darboux polynomials follow by multiplication.

\subsection{The case $a=0$: non-existence}
\label{sec:a0}
Under $a=0$, condition \eqref{eq:assumptions} implies $bc\neq 0$, and the system is
\begin{equation}
\dot{x}=y^2,\qquad \dot{y}=by+cx.
\tag{6}\label{eq:a0-system}
\end{equation}

\begin{lemma}[Normalization]
\label{lem:normalization}
The existence of a non-constant Darboux polynomial for \eqref{eq:a0-system} is equivalent to that for
\begin{equation}
\dot{X}=Y^2,\qquad \dot{Y}=X+Y.
\tag{7}\label{eq:normalized-system}
\end{equation}
\end{lemma}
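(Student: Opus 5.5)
We look for an invertible affine scaling $x=\alpha X$, $y=\beta Y$ combined with a constant time rescaling $t=\gamma s$, with $\alpha,\beta,\gamma\in\C^{*}$, that carries \eqref{eq:a0-system} to \eqref{eq:normalized-system}. Differentiating, the system becomes
\[
\frac{dX}{ds}=\frac{\gamma\beta^2}{\alpha}Y^2,\qquad
\frac{dY}{ds}=\gamma b\,Y+\frac{\gamma c\alpha}{\beta}X .
\]
Matching coefficients imposes
\[
\gamma b=1,\qquad \gamma\beta^2=\alpha,\qquad \gamma c\alpha=\beta .
\]
The first equation gives $\gamma=1/b$, which uses $b\neq 0$. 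The second then gives $\alpha=\beta^2/b$. Substituting into the third yields $c\beta^2/b^2=\beta$, so $\beta=b^2/c$, which uses $c\neq0$. It follows that $\alpha=b^3/c^2$. All three constants are non-zero because $bc\neq 0$, and this holds precisely under \eqref{eq:assumptions} with $a=0$.

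Next we check that Darboux polynomials correspond under this change. Write $\phi(X,Y)=(\alpha X,\beta Y)$. It is a linear automorphism of $\C[x,y]$, so it preserves degree, irreducibility and constancy. Let $D_0$ and $D_1$ be the derivations of \eqref{eq:a0-system} and \eqref{eq:normalized-system}. The computation above says exactly that $D_1(p\circ\phi)=\gamma\,(D_0p)\circ\phi$ for every polynomial $p$. Suppose $D_0p=qp$. Then $D_1(p\circ\phi)=(\gamma\,q\circ\phi)(p\circ\phi)$, so $p\circ\phi$ is a Darboux polynomial of $D_1$ with cofactor $\gamma\,q\circ\phi$. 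The same argument applied to $\phi^{-1}$ and $\gamma^{-1}$ gives the converse. This also uses the remark after \eqref{eq:derivation} that constant time rescaling only rescales the cofactor.

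Hence non-constant Darboux polynomials, and likewise irreducible ones, correspond bijectively. This establishes the stated equivalence. There is no real obstacle here. The only point requiring care is solving the three scaling equations consistently and confirming that $\alpha$, $\beta$ and $\gamma$ are finite and non-zero, which is exactly where $bc\neq0$ is used.
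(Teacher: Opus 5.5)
Your proof is correct and is essentially the paper's argument. Your constants $\alpha=b^3/c^2$, $\beta=b^2/c$, $\gamma=1/b$ give exactly the paper's substitution $X=c^2x/b^3$, $Y=cy/b^2$, $\tau=bt$, and your justification is the same one the paper gives: Darboux polynomials transport by pullback, with the cofactor rescaled by the non-zero constant $\gamma$. Your intertwining identity $D_1(p\circ\phi)=\gamma\,(D_0p)\circ\phi$ simply writes out the ``direct calculation'' the paper leaves implicit.
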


\begin{proof}[Verification]
Set
\[
X=\frac{c^2}{b^3}x,\qquad
Y=\frac{c}{b^2}y,\qquad
\tau=bt.
\]
A direct calculation gives
\[
\frac{\mathrm{d}X}{\mathrm{d}\tau}=Y^2,\qquad
\frac{\mathrm{d}Y}{\mathrm{d}\tau}=X+Y.
\]
The transformation is invertible because $bc\neq 0$. An invertible linear change of coordinates transports Darboux polynomials by pullback, while the constant rescaling of time multiplies cofactors by a non-zero constant.
\end{proof}

% --- Original PDF page 5 / 22 ---
We henceforth drop the capital letters and write the normalized derivation as
\begin{equation}
D_0=y^2\pd_x+(x+y)\pd_y.
\tag{8}\label{eq:D0}
\end{equation}
Choose the weights
\[
\wt(x)=3,\qquad \wt(y)=2.
\]
Then
\begin{equation}
D_0=\delta+E,\qquad
\delta=y^2\pd_x+x\pd_y,\qquad
E=y\pd_y,
\tag{9}\label{eq:delta-E-a0}
\end{equation}
where $\delta$ raises weight by one and $E$ preserves weight.

Define
\begin{equation}
H=2y^3-3x^2.
\tag{10}\label{eq:H}
\end{equation}
Then $\wt(H)=6$ and
\[
\delta(H)=y^2(-6x)+x(6y^2)=0.
\]

The following description of the kernel is useful.

\begin{lemma}
\label{lem:kernel-delta}
For $\delta=y^2\pd_x+x\pd_y$,
\[
\Ker_{\C[x,y]}\delta=\C[H].
\]
\end{lemma}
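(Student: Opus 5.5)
The inclusion $\C[H]\subseteq\Ker\delta$ is immediate from $\delta(H)=0$ and the Leibniz rule. So the work is the reverse inclusion. Since $\delta$ raises weight by exactly one, $\Ker\delta$ is spanned by weighted homogeneous elements. It therefore suffices to show the following: if $p$ is weighted homogeneous of weight $w$ (with $\wt(x)=3$, $\wt(y)=2$) and $\delta(p)=0$, then $p$ is a constant multiple of a power of $H$, and in particular $p=0$ unless $6\mid w$.

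I would argue by induction on $w$, using divisibility. Restricting to the line $y=0$ turns $\delta(p)=0$ into
\[
x\,\pd_y p(x,0)=0,
\]
so $\pd_y p$ vanishes on $y=0$ and $y\mid \pd_y p$. This alone does not yet force structure, so I would first work modulo $H$, or more concretely use the substitution that trivializes $\delta$.

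The cleanest route is a first-integral/characteristic argument. Away from $x=0$, $y=0$, the field $\delta$ has integral curves along which $H$ is constant. Locally on $\{H\neq 0\}$, any $p\in\Ker\delta$ is therefore a function of $H$ alone. To make this algebraic, I would use the weighted homogeneity. For a kernel element of weight $w$, consider the ratio $p^6/H^{w}$. It has weight zero and is annihilated by $\delta$, hence is annihilated by both $\delta$ and the weighted Euler field $3x\pd_x+2y\pd_y$.

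These two derivations are linearly independent over $\C(x,y)$, since their determinant is
\[
y^2\cdot 2y-x\cdot 3x=H\neq 0.
\]
A rational function killed by both therefore has zero gradient and is constant. Thus $p^6=\lambda H^{w}$ in $\C[x,y]$.

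Finally, $H$ is irreducible: $2y^3-3x^2$ is a cusp, irreducible because it is of degree two in $x$ and $3y^3/2$ is not a square. Unique factorization then gives $p=\mu H^{k}$ with $6k=w$. This reduction from $p^6=\lambda H^{w}$ to $p=\mu H^k$ is the main obstacle I expect, and I would handle it exactly this way. It yields $\Ker\delta=\C[H]$.
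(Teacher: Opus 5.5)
Your proof is correct, and it takes a genuinely different route from the paper.

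\textbf{The paper's route.} The paper never passes to rational functions. It uses $x^2=(2y^3-H)/3$ to write every polynomial uniquely as $A(y,H)+xB(y,H)$, and computes $\delta(A)=xA_y$ and $\delta(xB)=y^2B+x^2B_y$. It reads off $A_y=0$ from the part odd in $x$. It then kills $B$ using $3y^2B+(2y^3-H)B_y=0$: the top $y$-coefficient of $B$ is multiplied by $3+2m\neq 0$.

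\textbf{Your route.} You reduce to a weighted homogeneous $p$ of weight $w$. The quotient $p^6/H^w$ is then a weight-zero rational function annihilated by both $\delta$ and $3x\pd_x+2y\pd_y$. These two fields are independent over $\C(x,y)$, with determinant exactly $H$, so $p^6=\lambda H^w$. Irreducibility of $H$ and unique factorization then give $p=\mu H^{w/6}$.

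\textbf{What each buys.} Your argument is shorter and more conceptual. It only uses that $6\delta=H_y\pd_x-H_x\pd_y$ is the Hamiltonian field of a quasi-homogeneous $H$; this is also why the determinant equals $\tfrac16(3xH_x+2yH_y)=H$. So it carries over verbatim to other quasi-homogeneous Hamiltonians with irreducible $H$. The paper's computation is more pedestrian, but it builds exactly the basis $y^rH^s$, $xy^rH^s$ in which the operator identities \eqref{eq:op14}--\eqref{eq:op17} and all later descending-layer calculations are written. So it does double duty there.

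\textbf{Cosmetic points.} The opening remarks about induction on $w$ and restricting to $y=0$ are never used and can be dropped. In the irreducibility step the quantity to check is $2y^3/3$, not $3y^3/2$. This is harmless: an odd-degree polynomial is not a square in $\C(y)$, and Gauss's lemma applies because the $x^2$-coefficient is a non-zero constant.
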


\begin{proof}[Verification]
Because
\[
x^2=\frac{2y^3-H}{3},
\]
every polynomial can be written uniquely as
\[
P(x,y)=A(y,H)+xB(y,H),\qquad A,B\in\C[y,H].
\]
Uniqueness follows by separating the even and odd parts in $x$.

Since $\delta(H)=0$, one has
\[
\delta(A(y,H))=xA_y(y,H)
\]
and
\[
\delta(xB(y,H))=y^2B(y,H)+x^2B_y(y,H).
\]
If $\delta(P)=0$, the odd part gives $A_y=0$, hence $A=A(H)$. The even part satisfies
\[
3y^2B+(2y^3-H)B_y=0.
\]
If $B\neq 0$ and $m=\deg_y B$, the coefficient of the highest power $y^{m+2}$ is $(3+2m)$ times the leading coefficient of $B$ in $y$, which cannot vanish in characteristic zero. Hence $B=0$. Thus $P\in\C[H]$.
\end{proof}

\begin{theorem}
\label{thm:a0-no-darboux}
If $a=0$ and \eqref{eq:assumptions} holds, then \eqref{eq:system} has no non-constant Darboux polynomial.
\end{theorem}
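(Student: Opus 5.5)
By Lemma~\ref{lem:normalization} it suffices to treat $D_0=\delta+E$ from \eqref{eq:D0}, and by Lemma~\ref{lem:factor} we may assume $p$ is an irreducible (or at least non-constant) Darboux polynomial with $D_0(p)=qp$. Decompose $p=p_m+p_{m-1}+\dots+p_0$ into weighted homogeneous components for $\wt(x)=3$, $\wt(y)=2$, with $p_m\neq 0$. Write $q=\alpha+\beta y+\gamma x$, which is allowed by \eqref{eq:cofactor-degree}.

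\textbf{Step 1: the cofactor is constant and the top component is a power of $H$.} Since $\delta$ raises weight by one and $E$ preserves it, $D_0(p)$ has weight at most $m+1$. On the other side, $qp$ contains $\gamma x p_m$ in weight $m+3$ and $\beta y p_m$ in weight $m+2$. Hence $\gamma=\beta=0$ and $q=\alpha\in\C$. The weight $m+1$ part of the equation is then $\delta(p_m)=0$. By Lemma~\ref{lem:kernel-delta}, $p_m=\kappa H^k$ with $\kappa\neq0$ and $m=6k$, where $k\ge 1$ because $p$ is non-constant.

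\textbf{Step 2: cascade of compatibility conditions.} For each weight $j\le m$ the equation reads
\[
\delta(p_{j-1})=\alpha p_j-E(p_j).
\]
This is solvable only if $\alpha p_j-E(p_j)\in\Imop\delta$. To test membership I will use the normal form $P=A(y,H)+xB(y,H)$ from the proof of Lemma~\ref{lem:kernel-delta}. From
\[
\delta(A+xB)=xA_y+y^2B+\tfrac13(2y^3-H)B_y ,
\]
the $x$-odd part of a target determines $A_y$, which is always solvable. The $x$-even part requires solving
\[
y^2B+\tfrac13(2y^3-H)B_y=T(y,H).
\]
The leading coefficient $(3+2n)/3$ shows this is a triangular system in the $y$-degree. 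It is therefore solvable with a unique $B$ exactly when a certain linear condition on the $y^0$ and $y^1$ coefficients of $T$ holds.

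At weight $m$ the target is $\alpha H^k-6k\,y^3H^{k-1}$. Solving with $B=b_1(H)y$ gives $b_1=-\tfrac{18k}{5}H^{k-1}$ and forces $\alpha=\tfrac{6k}{5}$. As a cross-check on this value, at the equilibrium $(0,0)$ the linear part of $D_0$ has eigenvalues $0,1$. So if $p(0,0)\ne0$ then $\alpha=0$, which contradicts $k\ge1$. Otherwise $\alpha$ is a non-negative integer, namely an eigenvalue of $(x+y)\pd_y$ on the lowest homogeneous part of $p$, so $5\mid k$.

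\textbf{Step 3: iterate down to weight zero (main obstacle).} With $\alpha$ fixed, I then continue the cascade. At each step $p_{j-1}$ is determined modulo $\Ker\delta=\C[H]$, which adds a free constant times $H^{(j-1)/6}$ when $6\mid j-1$. The solvability condition at the next weight is a linear equation in these free constants. The goal is to show that these conditions cannot all be met while the components terminate at $p_0$. Equivalently, the formal weighted series solution of $D_0(p)=\tfrac{6k}{5}p$ starting from $H^k$ is forced to have non-zero components of arbitrarily negative weight, or it contradicts the integer-eigenvalue constraint at the origin.

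This bookkeeping is the step I expect to be hardest. What I would try first is a generating ansatz $p=H^k\Phi(y^3/H,\,x y/H,\dots)$, or equivalently writing everything in the $(A,B)$ normal form and tracking only the coefficients of the minimal $y$-powers. The aim is a closed recursion whose non-vanishing can be proved, for instance via a sign or positivity argument on the factors $(3+2n)$. If that fails, I would combine the weight-$m-1$ and weight-$m-2$ conditions with the local constraint at $(0,0)$, namely $\alpha\in\{0,1,\dots,\deg p_{\mathrm{low}}\}$. The purpose would be to obtain a second, incompatible formula for $\alpha$ in terms of $k$.
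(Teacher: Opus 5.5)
Your Step~1 and the first layer of Step~2 agree with the paper: the cofactor is constant, the top part is $\kappa H^k$, $P_{6k-1}=-\frac{18k}{5}xyH^{k-1}$, and $\alpha=\frac{6k}{5}$. Your remark at the origin is also correct: $\alpha$ must lie in $\{0,1,\dots,d\}$, where $d$ is the lowest degree of $p$, so $5\mid k$. But the proposal ends without a contradiction. Step~3 is a list of strategies, not an argument, and that step is the whole content of the proof.

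\textbf{Where the obstruction lies.} The missing idea is to locate the next obstruction and then compute it. Make your solvability criterion precise weight by weight. For odd target weight, $\delta(A)=xA_y$ is always solvable. For even target weight $w$, $\delta$ maps the weight-$(w-1)$ space injectively onto a subspace of codimension $0$ if $w\equiv 4\pmod 6$, and of codimension $1$ if $w\equiv 0,2\pmod 6$. So after the weight-$6k$ condition, which only fixes $\alpha$, the equations at target weights $6k-1$, $6k-2$, $6k-3$ impose nothing. The first genuine condition is $\delta(P_{6k-5})=(\alpha-E)P_{6k-4}$ in weight $6k-4$. None of $6k-1,\dots,6k-5$ is divisible by $6$, so $\Ker\delta=\C[H]$ contributes no free constants along the way. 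Hence $P_{6k-2},P_{6k-3},P_{6k-4}$ are uniquely forced; they are \eqref{eq:layer2}--\eqref{eq:layer4}. The paper pairs $(\alpha-E)P_{6k-4}$ with the cokernel functional $\Lambda$ and obtains $216k/3125\neq 0$ in \eqref{eq:compatibility-a0}. For $k=1$ this is transparent: $P_2=\frac{216}{625}y$, there are no weight-one monomials, and the weight-two equation demands $(\alpha-1)P_2=\frac15P_2=0$.

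\textbf{What this means for your plan.} The difficulties you anticipate do not arise, and the mechanism differs from the one you describe. There is no bookkeeping of free constants and no descent to weight zero. Nor is there a formal series with components of arbitrarily negative weight: $P_{6k-5}$ simply fails to exist after five layers. Your fallback would also fail. The weight-$(m-1)$ and weight-$(m-2)$ equations carry no solvability condition, so they cannot produce a second formula for $\alpha$. And $5\mid k$ is compatible with $\alpha=6k/5$ (e.g.\ $k=5$, $\alpha=6$), so the local constraint at the origin gives no contradiction on its own. To close the gap, carry out the four descending layers explicitly and evaluate the cokernel pairing at weight $6k-4$.
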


\begin{proof}
By Lemma~\ref{lem:normalization}, it suffices to work with $D_0$.
Let
\[
D_0(p)=qp.
\]
By \eqref{eq:cofactor-degree},
\[
q=\alpha x+\beta y+\lambda.
\]
Let $m$ be the maximal $(3,2)$-weight occurring in $p$. The operator $D_0$ can increase weight by at most one. If $\alpha\neq 0$, then $\alpha xp$ contains weight $m+3$; if $\beta\neq 0$, then $\beta yp$ contains weight $m+2$. Neither can be balanced by $D_0(p)$. Therefore
\begin{equation}
q=\lambda\in\C.
\tag{11}\label{eq:q-constant-a0}
\end{equation}

% --- Original PDF page 6 / 22 ---
Let $P_m$ denote the top weighted homogeneous part of $p$. The weight $m+1$ equation is
\[
\delta(P_m)=0.
\]
By Lemma~\ref{lem:kernel-delta}, $P_m\in\C[H]$. Since $P_m$ has one weight and $\wt(H)=6$,
\begin{equation}
m=6k,\qquad P_{6k}=H^k
\tag{12}\label{eq:top-a0}
\end{equation}
after normalization by a non-zero scalar. Non-triviality gives $k\geq 1$.

Write
\[
p=P_{6k}+P_{6k-1}+P_{6k-2}+\cdots,
\]
where $P_j$ is weighted homogeneous of weight $j$ and may be zero. Comparing equal weights in
\[
(\delta+E)p=\lambda p
\]
gives the recurrence
\begin{equation}
\delta(P_{r-1})=(\lambda-E)P_r.
\tag{13}\label{eq:recurrence-a0}
\end{equation}

For later checking, the action of $\delta$ and $E$ on the $\C[y,H]\oplus x\C[y,H]$ basis is
\begin{align}
\delta(y^rH^s)&=rxy^{r-1}H^s,
\tag{14}\label{eq:op14}\\
\delta(xy^rH^s)&=\frac{2r+3}{3}y^{r+2}H^s-\frac{r}{3}y^{r-1}H^{s+1},
\tag{15}\label{eq:op15}\\
E(y^rH^s)&=ry^rH^s+6s\,y^{r+3}H^{s-1},
\tag{16}\label{eq:op16}\\
E(xy^rH^s)&=rxy^rH^s+6s\,xy^{r+3}H^{s-1}.
\tag{17}\label{eq:op17}
\end{align}

We now solve the recurrence explicitly.

\medskip
\noindent\textbf{First descending layer.}
A complete basis of weight $6k$ is
\[
e_j=y^{3(k-j)}H^j,\qquad j=0,\ldots,k,
\]
and a complete basis of weight $6k-1$ is
\[
f_j=xy^{3(k-j)-2}H^j,\qquad j=0,\ldots,k-1.
\]
Using \eqref{eq:op15},
\begin{equation}
\delta(f_j)
=
\frac{6(k-j)-1}{3}e_j
-
\frac{3(k-j)-2}{3}e_{j+1}.
\tag{18}\label{eq:triangular-a0}
\end{equation}
The right-hand side of the first recurrence equation is
\[
(\lambda-E)H^k=\lambda H^k-6k\,y^3H^{k-1},
\]
which involves only $e_{k-1}$ and $e_k$. Because \eqref{eq:triangular-a0} is triangular with non-zero leading coefficients, the coefficients of $f_0,\ldots,f_{k-2}$ must vanish. Thus
\[
P_{6k-1}=u\,xyH^{k-1}.
\]
Now
\[
\delta(xyH^{k-1})=\frac{5}{3}y^3H^{k-1}-\frac{1}{3}H^k,
\]
so comparison gives
\[
\frac{5}{3}u=-6k,\qquad -\frac{1}{3}u=\lambda.
\]
Hence
\begin{equation}
P_{6k-1}=-\frac{18k}{5}xyH^{k-1},
\qquad
\lambda=\frac{6k}{5}.
\tag{19}\label{eq:layer1}
\end{equation}

% --- Original PDF page 7 / 22 ---
\medskip
\noindent\textbf{Second, third and fourth descending layers.}
Substitution into \eqref{eq:recurrence-a0}, using \eqref{eq:op14}--\eqref{eq:op17}, gives
\begin{equation}
P_{6k-2}
=
\frac{9k(5-6k)}{25}y^2H^{k-1}
+
\frac{108k(k-1)}{25}y^5H^{k-2},
\tag{20}\label{eq:layer2}
\end{equation}
\begin{equation}
P_{6k-3}
=
x\left[
\frac{36k}{125}H^{k-1}
+
\frac{162k(k-1)(2k-3)}{125}y^3H^{k-2}
-
\frac{648k(k-1)(k-2)}{125}y^6H^{k-3}
\right],
\tag{21}\label{eq:layer3}
\end{equation}
and
\begin{align}
P_{6k-4}
={}&
\frac{216k^2}{625}yH^{k-1}
+
\frac{27k(k-1)(36k^2-144k+115)}{1250}y^4H^{k-2}
\nonumber\\
&-
\frac{972k(k-1)(k-2)(2k-5)}{625}y^7H^{k-3}
+
\frac{1944k(k-1)(k-2)(k-3)}{625}y^{10}H^{k-4}.
\tag{22}\label{eq:layer4}
\end{align}
Whenever an exponent of $H$ in these formulas would be negative, the corresponding summand is omitted; its coefficient contains the necessary vanishing factor. Equations \eqref{eq:layer1}--\eqref{eq:layer4} can be verified directly from the four operator identities above.

\medskip
\noindent\textbf{The obstruction.}
The next equation is
\begin{equation}
\delta(P_{6k-5})=(\lambda-E)P_{6k-4}.
\tag{23}\label{eq:obstruction-equation}
\end{equation}
A convenient basis of the target space of weight $6k-4$ is
\[
e_j=y^{3j-2}H^{k-j},\qquad j=1,\ldots,k,
\]
whereas a basis of the source space of weight $6k-5$ is
\[
f_j=xy^{3j-4}H^{k-j},\qquad j=2,\ldots,k.
\]
Equation \eqref{eq:op15} becomes
\begin{equation}
\delta(f_j)
=
-\frac{3j-4}{3}e_{j-1}
+
\frac{6j-5}{3}e_j.
\tag{24}\label{eq:cokernel-map-a0}
\end{equation}
Thus $\delta$ has codimension one in this layer. Define a linear functional $\Lambda$ on the target by
\[
\Lambda(e_1)=1,\qquad
\Lambda(e_j)=\frac{3j-4}{6j-5}\Lambda(e_{j-1})
\quad(j\geq 2).
\]

% --- Original PDF page 8 / 22 ---
Then \eqref{eq:cokernel-map-a0} gives
\[
\Lambda\circ\delta=0.
\]
The first values are
\begin{equation}
\Lambda(e_1)=1,\quad
\Lambda(e_2)=\frac{2}{7},\quad
\Lambda(e_3)=\frac{10}{91},\quad
\Lambda(e_4)=\frac{80}{1729},\quad
\Lambda(e_5)=\frac{176}{8645}.
\tag{25}\label{eq:lambda-values}
\end{equation}
Set
\[
R=(\lambda-E)P_{6k-4}.
\]
Using \eqref{eq:op16} and \eqref{eq:layer4}, one obtains
\[
R=\rho_1e_1+\rho_2e_2+\rho_3e_3+\rho_4e_4+\rho_5e_5,
\]
with absent basis vectors omitted, and
\begin{align}
\rho_1&=\frac{216k^2(6k-5)}{3125},
\tag{26}\label{eq:rho1}\\
\rho_2&=\frac{27k(k-1)(108k^3-792k^2+1545k-1150)}{3125},
\tag{27}\label{eq:rho2}\\
\rho_3&=-\frac{81k(k-1)(k-2)(324k^2-1920k+2675)}{3125},
\tag{28}\label{eq:rho3}\\
\rho_4&=\frac{1944k(k-1)(k-2)(k-3)(36k-125)}{3125},
\tag{29}\label{eq:rho4}\\
\rho_5&=-\frac{11664k(k-1)(k-2)(k-3)(k-4)}{625}.
\tag{30}\label{eq:rho5}
\end{align}
Applying \eqref{eq:lambda-values} and simplifying yields the non-zero compatibility obstruction
\begin{equation}
\Lambda(R)
=
\rho_1+\frac{2}{7}\rho_2+\frac{10}{91}\rho_3
+\frac{80}{1729}\rho_4+\frac{176}{8645}\rho_5
=
\frac{216k}{3125}.
\tag{31}\label{eq:compatibility-a0}
\end{equation}
Since $k\geq 1$, the right-hand side does not vanish. Therefore $R\notin\Imop\delta$, contradicting \eqref{eq:obstruction-equation}. No non-constant Darboux polynomial exists.
\end{proof}

\subsection{The case $a\neq 0$: the two parameter branches}
\label{sec:aneq0}
Assume from now on
\[
a\neq 0,\qquad c\neq 0.
\]
Define
\begin{equation}
A=ay^2+cx,\qquad
B=cx,\qquad
d=\frac{c}{2a},\qquad
\mu=b+d=b+\frac{c}{2a}.
\tag{32}\label{eq:ABdmu}
\end{equation}
Then
\begin{equation}
A-B=ay^2.
\tag{33}\label{eq:AminusB}
\end{equation}
Because $c\neq 0$, $(B,y)$ is a genuine polynomial coordinate system:
\[
\C[x,y]=\C[B,y],\qquad A=B+ay^2.
\]
A direct calculation from \eqref{eq:derivation} gives
\begin{align}
D(y)&=A+by,
\tag{34}\label{eq:Dy}\\
D(A)&=2ayA+2\mu(A-B),
\tag{35}\label{eq:DA}\\
D(B)&=2d(A-B).
\tag{36}\label{eq:DB}
\end{align}

% --- Original PDF page 9 / 22 ---
Use the weights
\[
\wt(y)=1,\qquad \wt(A)=\wt(B)=2.
\]
We decompose
\begin{equation}
D=\delta+E,
\tag{37}\label{eq:DE-decomp}
\end{equation}
where the two derivations are defined on the generators by
\begin{equation}
\delta(y)=A,\qquad
\delta(A)=2ayA,\qquad
\delta(B)=0,
\tag{38}\label{eq:delta-gens}
\end{equation}
\begin{equation}
E(y)=by,\qquad
E(A)=2\mu(A-B),\qquad
E(B)=2d(A-B).
\tag{39}\label{eq:E-gens}
\end{equation}
Thus $\delta$ raises weight by one and $E$ preserves weight.

It is important that $A,B,y$ are not independent. The formulas above are identities in
\[
\C[A,B,y]/(A-B-ay^2).
\]
In the independent coordinates $(B,y)$ they become
\begin{equation}
\delta=(B+ay^2)\left.\frac{\pd}{\pd y}\right|_B,
\qquad
E=by\left.\frac{\pd}{\pd y}\right|_B
+2d\,ay^2\left.\frac{\pd}{\pd B}\right|_y.
\tag{40}\label{eq:independent-coords}
\end{equation}
Equations \eqref{eq:delta-gens}--\eqref{eq:independent-coords} are used in all computations below, so no partial derivative with respect to $A$ is discarded.

The relation \eqref{eq:AminusB} gives the direct decomposition
\begin{equation}
\C[x,y]=\C[A,B]\oplus y\C[A,B].
\tag{41}\label{eq:direct-decomp}
\end{equation}
Consequently,
\begin{equation}
V_{2n}=\Span\{A^{n-j}B^j:0\leq j\leq n\},
\tag{42}\label{eq:V2n}
\end{equation}
\begin{equation}
V_{2n-1}=y\,\Span\{A^{n-1-j}B^j:0\leq j\leq n-1\}.
\tag{43}\label{eq:V2nminus1}
\end{equation}
These are the complete weighted homogeneous spaces.

\begin{lemma}[Highest weighted part]
\label{lem:highest-weight}
Let $p$ be a non-constant Darboux polynomial. Then its highest weighted homogeneous part is, up to a non-zero scalar,
\[
P_{2n}=B^iA^k,\qquad
i,k\in\N,\quad i+k=n,\quad (i,k)\neq(0,0),
\]
and the cofactor has the form
\[
q=2aky+q_0.
\]
\end{lemma}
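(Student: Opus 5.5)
Write $D(p)=qp$ and decompose $p=P_m+P_{m-1}+\cdots$ into weighted homogeneous parts for $\wt(y)=1$, $\wt(A)=\wt(B)=2$, where $m$ is the top weight. By \eqref{eq:cofactor-degree} the cofactor is $q=\alpha x+\beta y+q_0$. Since $c\neq0$, we have $x=B/c$, so $q=\frac{\alpha}{c}B+\beta y+q_0$, with weights $2,1,0$.

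\textbf{Step 1: bound the cofactor and get the top-weight equation.} By \eqref{eq:DE-decomp}, $D$ raises weight by at most one. The term $\frac{\alpha}{c}BP_m$ sits in weight $m+2$, and nothing on the left-hand side can balance it. Hence $\alpha=0$. Comparing weight $m+1$ then gives
\[
\delta(P_m)=\beta y\,P_m .
\]

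\textbf{Step 2: solve this equation in the coordinates $(B,y)$.} By \eqref{eq:independent-coords}, $\delta=(B+ay^2)\pd_y|_B$, so the equation reads
\[
(B+ay^2)\,\pd_y P_m=\beta y\,P_m .
\]
The polynomial $B+ay^2$ is irreducible in $\C[B,y]$, because $a\neq0$ and it has degree one in $B$. Factor $P_m=(B+ay^2)^kG$ with $B+ay^2\nmid G$. Substituting gives
\[
(B+ay^2)\,\pd_yG=(\beta-2ak)\,y\,G .
\]
Since $B+ay^2$ is coprime to $yG$, it must divide $\beta-2ak$. Therefore $\beta=2ak$ and $\pd_yG=0$, so $G=G(B)$.

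Next, $P_m$ is weighted homogeneous and $A=B+ay^2$ has weight $2$. Hence $G$ is weighted homogeneous in $B$ alone, so $G=\gamma B^i$ with $\gamma\neq0$. This forces $m=2n$ even, $n=i+k$, and $P_{2n}=\gamma B^iA^k$, as in \eqref{eq:V2n}.

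\textbf{Step 3: exclude the degenerate top part.} If $(i,k)=(0,0)$, the top part is a constant, so $p$ itself is constant. This is excluded because $p$ is non-constant. This gives the stated form $q=2aky+q_0$.

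The main subtlety is Step 2. One must work in the independent coordinates $(B,y)$ rather than treating $A,B,y$ as free variables; the irreducibility and coprimality argument is what replaces a kernel computation of the type in Lemma~\ref{lem:kernel-delta}. The rest of the argument is weight bookkeeping.
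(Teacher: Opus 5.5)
Your proof is correct and follows the paper's route. You remove the $x$-term of the cofactor by weight counting, reduce to $(B+ay^2)\pd_y P_m=\beta y\,P_m$ in the independent coordinates $(B,y)$, and use weighted homogeneity to force the remaining factor to be a monomial $\gamma B^i$. The only difference is how you solve that equation: you extract the maximal power of the prime $B+ay^2$ in $\C[B,y]$, whereas the paper integrates $P_y/P=\lambda y/(B+ay^2)$ over $\C(B)$ and appeals to polynomiality of $C(B)A^{\lambda/(2a)}$. Your divisibility argument is a tighter justification of $\lambda/(2a)\in\N$, and you also state explicitly the exclusion of $(i,k)=(0,0)$, which the paper leaves tacit.
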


\begin{proof}
By \eqref{eq:cofactor-degree},
\[
q=\alpha x+\beta y+\gamma.
\]
Since $x=B/c$ has weight two, the term $\alpha xP_{2n}$ would have weight $2n+2$, whereas $D$ raises weight by at most one. Hence $\alpha=0$ and
\[
q=\lambda y+q_0.
\]
The top weighted equation is
\begin{equation}
\delta(P)=\lambda yP.
\tag{44}\label{eq:top-weighted-neq0}
\end{equation}
In the independent coordinates $(B,y)$,
\[
\delta=A\pd_y|_B=(B+ay^2)\pd_y|_B.
\]

% --- Original PDF page 10 / 22 ---
Therefore \eqref{eq:top-weighted-neq0} becomes
\[
(B+ay^2)P_y=\lambda yP.
\]
Over the field $\C(B)$ this gives
\[
\frac{P_y}{P}=\frac{\lambda y}{B+ay^2},
\]
and hence
\[
P=C(B)(B+ay^2)^{\lambda/(2a)}=C(B)A^{\lambda/(2a)}.
\]
Polynomiality forces
\[
\frac{\lambda}{2a}=k\in\N.
\]
Weighted homogeneity then forces $C(B)$ to be a single monomial $B^i$. Thus
\[
P=B^iA^k,\qquad \lambda=2ak.
\]
\end{proof}

Let
\[
n=i+k
\]
and write
\[
p=P_{2n}+P_{2n-1}+P_{2n-2}+\cdots.
\]
Define the linear operator
\begin{equation}
L(f)=\delta(f)-2akyf.
\tag{45}\label{eq:L-def}
\end{equation}
Comparing equal weights in
\[
(\delta+E)p=(2aky+q_0)p
\]
gives
\begin{equation}
L(P_{2n-r-1})=(q_0-E)P_{2n-r},\qquad r\geq 0.
\tag{46}\label{eq:recurrence-neq0}
\end{equation}

\medskip
\noindent\textbf{The first descending equation and the constant cofactor}

A complete basis of $V_{2n-1}$ is
\[
f_j=yA^{n-1-j}B^j,\qquad j=0,\ldots,n-1.
\]
We compute $L(f_j)$ explicitly. Put $r=n-1-j$. From \eqref{eq:delta-gens},
\[
\delta(yA^rB^j)=A^{r+1}B^j+2ar\,y^2A^rB^j.
\]
Using $ay^2=A-B$,
\[
\delta(yA^rB^j)=(2r+1)A^{r+1}B^j-2rA^rB^{j+1}.
\]
The multiplication term in $L$ is
\[
-2aky(yA^rB^j)=-2k(A-B)A^rB^j.
\]
Therefore
\begin{equation}
L(f_j)
=
(2n-2j-1-2k)A^{n-j}B^j
+
[-2(n-j-1)+2k]A^{n-1-j}B^{j+1}.
\tag{47}\label{eq:L-fj-general}
\end{equation}

% --- Original PDF page 11 / 22 ---
Since $n=i+k$,
\begin{equation}
L(f_j)
=
[2(i-j)-1]A^{n-j}B^j
+
2(j-i+1)A^{n-1-j}B^{j+1}.
\tag{48}\label{eq:L-fj}
\end{equation}
This is the bidiagonal map used below.

Next,
\begin{align}
(q_0-E)(B^iA^k)
={}&-2di\,B^{i-1}A^{k+1}
+(q_0+2di-2\mu k)B^iA^k
+2\mu k\,B^{i+1}A^{k-1}.
\tag{49}\label{eq:q0E-top}
\end{align}
Equation \eqref{eq:q0E-top} shows immediately that
\begin{equation}
P_{2n-1}
=
y\left(-2di\,B^{i-1}A^k+k\mu\,B^iA^{k-1}\right),
\tag{50}\label{eq:P2nminus1}
\end{equation}
with an absent term omitted if its exponent would be negative. The middle coefficient in \eqref{eq:q0E-top} then gives
\begin{equation}
q_0=k\mu-2id.
\tag{51}\label{eq:q0}
\end{equation}
The map in \eqref{eq:L-fj} is injective because its leading bidiagonal coefficients $2(i-j)-1$ never vanish. Thus no unrecorded homogeneous term is hidden in \eqref{eq:P2nminus1}.

\begin{theorem}[Necessary parameter conditions for $a\neq 0$]
\label{thm:necessary-branches}
Assume $a\neq 0$ and $c\neq 0$. If a non-constant Darboux polynomial exists, then
\[
c=-ab\qquad\text{or}\qquad c=-2ab.
\]
\end{theorem}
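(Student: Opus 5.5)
We continue the descending recurrence \eqref{eq:recurrence-neq0} one or two steps beyond \eqref{eq:P2nminus1}. By Lemma~\ref{lem:factor} we may assume that $p$ is irreducible. Lemma~\ref{lem:highest-weight} then gives $P_{2n}=B^iA^k$ and $q=2aky+q_0$, and \eqref{eq:P2nminus1}, \eqref{eq:q0} fix $P_{2n-1}$ and $q_0=k\mu-2id$. The next equation is
\[
L(P_{2n-2})=(q_0-E)P_{2n-1},\qquad P_{2n-2}\in V_{2n-2},\quad L:V_{2n-2}\to V_{2n-1}.
\]
Both spaces have dimension $n$. We compute $L(A^{n-1-j}B^j)$ from \eqref{eq:delta-gens} and $ay^2=A-B$, exactly as in \eqref{eq:L-fj}. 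This gives $y$ times a bidiagonal combination of $A^{n-1-j}B^j$ and $A^{n-2-j}B^{j+1}$, with diagonal coefficient proportional to $(i-1-j)$. Hence $L$ has a one-dimensional kernel, spanned by $B^{i-1}A^k$ when $i\ge1$, and a one-dimensional cokernel. When $i=0$ the map is injective; in that case we use the same idea at the next even-to-odd layer $V_{2n-4}\to V_{2n-3}$, where the kernel $A^{k-1}\cdot(\text{weight }0)$ cannot appear.

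As in the case $a=0$, we construct a linear functional $\Lambda$ on $V_{2n-1}$ with $\Lambda\circ L=0$. It is defined by a two-term recursion on the basis $yA^{n-1-j}B^j$, read off from the bidiagonal formula. We then compute $(q_0-E)P_{2n-1}$ using \eqref{eq:E-gens}. This is routine: $E(y)=by$, $E(A)=2\mu(A-B)$, $E(B)=2d(A-B)$, and $E$ acts as a derivation. Solvability requires $\Lambda\big((q_0-E)P_{2n-1}\big)=0$. This is an explicit polynomial identity in $b,d,\mu=b+d$ and the integers $i,k$.

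The expected outcome is as follows. The obstruction factors so that, for $(i,k)\ne(0,0)$, it vanishes only on the lines $d=-b/2$ (equivalently $c=-ab$, $\mu=b/2$) or $d=-b$ (equivalently $c=-2ab$, $\mu=0$). These lines should be realized by $(i,k)=(1,0)$ and $(0,1)$ respectively, matching $F_1$ (top part $-\tfrac{a}{c}B$) and $F_2$ (top part $A/a$). If the first obstruction alone does not separate these cases, we would carry the recurrence one more layer. At that layer the kernel freedom in $P_{2n-2}$, namely the multiple of $B^{i-1}A^k$, is fixed by the next cokernel condition, and we combine the two scalar conditions to eliminate the free parameter.

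The main obstacle is this obstruction computation: obtaining a closed form for $\Lambda$ valid for all $(i,k)$, and showing that the resulting expression factors cleanly into a nonvanishing factor in $i,k$ times $(2d+b)(d+b)$ or a similar product. I would first work out small cases, say $n\le3$, to guess the factorization. Then I would prove it by summing the telescoping products defining $\Lambda$ against the at most three-term expression for $(q_0-E)P_{2n-1}$. Throughout, we use $c\neq0$ so that $(B,y)$ are honest coordinates, and $2ay+b\not\equiv0$ to exclude the degenerate case $b=0$ where the two lines collapse.
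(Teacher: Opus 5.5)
Your framework (weighted descent through \eqref{eq:recurrence-neq0} with cokernel obstructions) is the paper's, and for $i\ge 1$ you pick the right equation $L(P_{2n-2})=(q_0-E)P_{2n-1}$. However, $L$ there is diagonal, not bidiagonal. Since $\delta(B)=0$ and $\delta(A)=2ayA$, one gets $L(A^{n-1-j}B^j)=2a(i-1-j)\,yA^{n-1-j}B^j$ with no use of $ay^2=A-B$. So the cokernel functional is just the coefficient of $yB^{i-1}A^k$, and no telescoping $\Lambda$ is needed. Computing that coefficient from \eqref{eq:P2nminus1}, \eqref{eq:q0} and \eqref{eq:E-gens} gives $2di(2\mu-b)=2di(b+2d)$, which forces $c=-ab$ and nothing else. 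Your predicted factorization of a single obstruction as (non-vanishing)$\cdot(2d+b)(d+b)$ is therefore not what happens. The branch $c=-2ab$ arises only from $i=0$, at a different layer and with a different obstruction.

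The genuine gap is the case $i=0$, so $n=k$. You propose to find the obstruction at the even-to-odd layer $V_{2n-4}\to V_{2n-3}$. But for $i=0$ every even-to-odd map below the top is an isomorphism: $L(A^{m-j}B^j)=2a(m-j-k)\,yA^{m-j}B^j$, and $m-j-k<0$ whenever $m<k$. No parameter condition can ever come from those layers, so as written your argument yields nothing in this case.

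The obstruction sits in the odd-to-even layer $L:V_{2k-3}\to V_{2k-2}$, which has a cokernel for dimension reasons ($k-1<k$). The missing steps are:
\begin{enumerate}[label=(\roman*)]
\item Solve the bijective equation for $P_{2k-2}$, obtaining \eqref{eq:P2kminus2}.
\item Use $L(yA^{k-2-j}B^j)=-(2j+3)e_j+2(j+2)e_{j+1}$, with $e_j=A^{k-1-j}B^j$, to define $\Phi$ by $\Phi(e_0)=1$ and $\Phi(e_{j+1})=\frac{2j+3}{2j+4}\Phi(e_j)$.
\item Evaluate $\Phi\bigl((q_0-E)P_{2k-2}\bigr)=-dk\mu^2/(2a)$, which forces $\mu=0$, i.e.\ $c=-2ab$.
\end{enumerate}
Without these steps, and with the $i\ge1$ obstruction only guessed, the proposal does not prove the theorem.

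A minor misreading: when $a\neq0$ the hypothesis $2ay+b\not\equiv0$ holds automatically and does not exclude $b=0$. No exclusion is needed, since for $b=0$ both obstructions are non-zero.
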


\begin{proof}
There are two cases according to the exponent $i$ in the top part $B^iA^k$.

\medskip
\noindent\textbf{Case 1: $i>0$.}
The next equation is
\[
L(P_{2n-2})=(q_0-E)P_{2n-1}.
\]
A basis of $V_{2n-2}$ is
\[
g_j=A^{n-1-j}B^j,\qquad j=0,\ldots,n-1.
\]
Since $\delta(B)=0$ and $\delta(A)=2ayA$,
\begin{equation}
L(g_j)=2a(i-1-j)yA^{n-1-j}B^j.
\tag{52}\label{eq:L-gj}
\end{equation}
For $j=i-1$, the coefficient vanishes. Hence the basis direction
\[
yB^{i-1}A^k
\]
does not belong to the image of $L$.

Write
\[
P_{2n-1}=yF,\qquad
F=-2di\,B^{i-1}A^k+k\mu\,B^iA^{k-1}.
\]
Since $E(yF)=y(bF+E(F))$, the coefficient of $yB^{i-1}A^k$ in $(q_0-E)P_{2n-1}$ can be calculated directly from \eqref{eq:E-gens}. The two monomials in $F$ contribute
\[
2di(2\mu-b).
\]

% --- Original PDF page 12 / 22 ---
For completeness, one may see this without suppressing any term: the coefficient of $B^{i-1}A^k$ in $E(B^{i-1}A^k)$ is
\[
-2d(i-1)+2\mu k,
\]
whereas its coefficient in $E(B^iA^{k-1})$ is $2di$. Using $q_0=k\mu-2id$ gives
\begin{align*}
[B^{i-1}A^k]\bigl((q_0-b)F-E(F)\bigr)
&=(-2di)\bigl(q_0-b+2d(i-1)-2\mu k\bigr)-(k\mu)(2di)\\
&=2di(b+2d)=2di(2\mu-b).
\end{align*}
Solvability therefore requires
\[
2di(2\mu-b)=0.
\]
Because $i>0$ and
\[
d=\frac{c}{2a}\neq 0,
\]
we obtain
\[
2\mu-b=0.
\]
Since $\mu=b+c/(2a)$,
\[
2\left(b+\frac{c}{2a}\right)-b=0
\quad\Longrightarrow\quad
c=-ab.
\]

\medskip
\noindent\textbf{Case 2: $i=0$.}
Then $k>0$,
\[
P_{2k}=A^k,\qquad
q_0=k\mu,\qquad
P_{2k-1}=k\mu\,yA^{k-1}.
\]
Now
\[
L(A^{k-1-j}B^j)=-2a(j+1)yA^{k-1-j}B^j,
\]
so the next equation is uniquely solvable. Direct substitution gives
\begin{equation}
P_{2k-2}
=
\frac{k\mu[b+(k-2)\mu]}{2a}A^{k-1}
-
\frac{k(k-1)\mu^2}{2a}BA^{k-2}.
\tag{53}\label{eq:P2kminus2}
\end{equation}
The following equation is
\begin{equation}
L(P_{2k-3})=(q_0-E)P_{2k-2}.
\tag{54}\label{eq:next-i0}
\end{equation}
Set
\[
e_j=A^{k-1-j}B^j,\qquad j=0,\ldots,k-1,
\]
and
\[
f_j=yA^{k-2-j}B^j,\qquad j=0,\ldots,k-2.
\]
A direct calculation gives
\begin{equation}
L(f_j)=-(2j+3)e_j+2(j+2)e_{j+1}.
\tag{55}\label{eq:L-i0}
\end{equation}
Hence the image has codimension one. Define
\[
\Phi(e_0)=1,\qquad
\Phi(e_{j+1})=\frac{2j+3}{2j+4}\Phi(e_j).
\]
Then $\Phi\circ L=0$, and in particular
\[
\Phi(e_0)=1,\qquad
\Phi(e_1)=\frac34,\qquad
\Phi(e_2)=\frac58.
\]

% --- Original PDF page 13 / 22 ---
Using \eqref{eq:P2kminus2} and \eqref{eq:E-gens}, the right-hand side of \eqref{eq:next-i0} has support only on $e_0,e_1,e_2$. Its projection to the cokernel is
\begin{equation}
\Phi\bigl((q_0-E)P_{2k-2}\bigr)
=
-\frac{dk\mu^2}{2a}.
\tag{56}\label{eq:cokernel-i0}
\end{equation}
For transparency, \eqref{eq:cokernel-i0} follows by first writing
\[
P_{2k-2}=C_0e_0+C_1e_1,
\]
where
\[
C_0=\frac{k\mu[b+(k-2)\mu]}{2a},
\qquad
C_1=-\frac{k(k-1)\mu^2}{2a},
\]
then using
\[
E(e_j)=2\mu(k-1-j)(e_j-e_{j+1})+2dj(e_{j-1}-e_j),
\]
and finally applying
\[
\Phi(e_0)=1,\qquad \Phi(e_1)=\frac34,\qquad \Phi(e_2)=\frac58.
\]
The terms containing $\mu^3$ cancel and the remaining expression is exactly $-dk\mu^2/(2a)$.

Solvability forces
\[
dk\mu^2=0.
\]
Since $d\neq 0$ and $k>0$,
\[
\mu=0.
\]
Thus
\[
b+\frac{c}{2a}=0
\quad\Longrightarrow\quad
c=-2ab.
\]
The two cases exhaust all possible top weighted forms.
\end{proof}

The sufficient conditions are immediate.

\begin{proposition}[Direct verification]
\label{prop:direct-verification}
If $c=-ab$, then
\[
F_1=y-ax
\]
is Darboux with cofactor $b$. If $c=-2ab$, then
\[
F_2=y^2-2bx
\]
is Darboux with cofactor $2ay$.
\end{proposition}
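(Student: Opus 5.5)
The plan is a direct substitution into the derivation \eqref{eq:derivation}, $D=y^2\pd_x+(ay^2+by+cx)\pd_y$, applied to each candidate polynomial. Both claims are polynomial identities in $x,y$ once the parameter relation is imposed. Nothing from the weighted-homogeneous machinery of Section~\ref{sec:aneq0} is needed. I will only use that $D$ is a derivation, so $D(F)=y^2F_x+(ay^2+by+cx)F_y$.

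\textbf{The branch $c=-ab$.} For $F_1=y-ax$ we have $(F_1)_x=-a$ and $(F_1)_y=1$, so
\[
D(F_1)=-ay^2+ay^2+by+cx=by+cx .
\]
Substituting $c=-ab$ gives $by-abx=b(y-ax)=bF_1$. Hence $F_1$ is Darboux with constant cofactor $b$. This is consistent with Lemma~\ref{lem:highest-weight}: there $k=0$ and $q_0=-2id$ with $i=1$ and $d=c/(2a)=-b/2$, which gives $q_0=b$.

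\textbf{The branch $c=-2ab$.} For $F_2=y^2-2bx$ we have $(F_2)_x=-2b$ and $(F_2)_y=2y$, so
\[
D(F_2)=-2by^2+2y(ay^2+by+cx)=2ay^3+2cxy .
\]
Substituting $c=-2ab$ gives $2ay^3-4abxy=2ay(y^2-2bx)=2ayF_2$. Hence $F_2$ is Darboux with cofactor $2ay$. This matches Lemma~\ref{lem:highest-weight} with $k=1$ and $i=0$, and then $q_0=k\mu=0$ because $\mu=0$ on this branch.

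The two verifications are independent and each consists of one line of algebra, so I expect no real obstacle. The only point to be careful about is the cancellation of $y^2$ terms in the first computation, and noticing that in the second computation the $y^2$ terms cancel as well ($-2by^2+2by^2=0$).

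Non-constancy is immediate: the non-degeneracy assumption \eqref{eq:assumptions} gives $a\neq0$, so $F_1$ has a nonzero $x$-coefficient, and $F_2$ contains $y^2$. Irreducibility also holds in both cases. $F_1$ is linear. For $F_2$, when $b\neq0$ it is linear in $x$ with coefficient $-2b$ coprime to $y^2$. (When $b=0$ we have $c=0$, which is excluded.) These observations are not required for the proposition as stated.
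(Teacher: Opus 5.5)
Your proof is correct and is the same as the paper's: you substitute directly into $D$, note that the $y^2$ terms cancel, and factor out $b$ (resp.\ $2ay$) after imposing $c=-ab$ (resp.\ $c=-2ab$). One small slip in your optional closing remark: \eqref{eq:assumptions} alone does not give $a\neq 0$, since it only excludes $a=b=0$. On the first branch, $a\neq 0$ follows from $c=-ab$ together with $c\neq 0$, or from the standing assumption of Section~\ref{sec:aneq0}.
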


\begin{proof}[Verification]
For the first branch,
\[
D(y-ax)=(ay^2+by+cx)-ay^2=by+cx.
\]
If $c=-ab$,
\[
by+cx=b(y-ax).
\]
For the second branch,
\[
D(y^2-2bx)=2y(ay^2+by+cx)-2by^2
=2ay^3+2cxy.
\]
If $c=-2ab$,
\[
2ay^3+2cxy=2ay(y^2-2bx).
\]
\end{proof}

% --- Original PDF page 14 / 22 ---
\subsection{Uniqueness of the irreducible Darboux curve on each branch}
\label{sec:uniqueness}
The uniqueness argument is most naturally expressed in quotient rings. We emphasize a point that is sometimes stated imprecisely: if $h$ is Darboux, then reducing a polynomial representative modulo $h$ does not generally produce another Darboux polynomial for the original derivation. What is true is that the derivation descends to the quotient ring.

\begin{lemma}[Induced derivation on a Darboux quotient]
\label{lem:quotient}
If $D(h)=K_hh$, then the principal ideal $(h)$ is $D$-stable and
\[
D([f])=[D(f)]
\]
defines a derivation on $\C[x,y]/(h)$. If $D(p)=qp$, then
\[
D([p])=[q][p]
\]
in the quotient.
\end{lemma}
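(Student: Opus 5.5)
The plan is to verify the two claims directly from the definitions. The first is that $(h)$ is $D$-stable and $D$ descends to a well-defined derivation on $\C[x,y]/(h)$. The second is that the Darboux relation passes to the quotient.

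First, $D$-stability. Take an arbitrary element $gh\in(h)$ and use the Leibniz rule together with $D(h)=K_hh$:
\[
D(gh)=D(g)h+gD(h)=\bigl(D(g)+gK_h\bigr)h\in(h).
\]
Hence $D((h))\subseteq(h)$.

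Next, well-definedness of the induced map. Suppose $[f]=[f']$, so that $f-f'=gh$ for some $g$. By linearity of $D$ and the previous step, $D(f)-D(f')=D(gh)\in(h)$, hence $[D(f)]=[D(f')]$. Therefore the rule $D([f])=[D(f)]$ is independent of the representative.

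It remains to check that the induced map is a derivation. Linearity is inherited from $D$ because the projection $\pi:\C[x,y]\to\C[x,y]/(h)$ is a ring homomorphism. For the Leibniz rule, use that multiplication in the quotient is $[f][g]=[fg]$:
\[
D([f][g])=[D(fg)]=[D(f)g+fD(g)]=D([f])[g]+[f]D([g]).
\]
Constants are killed because $D$ kills them.

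Finally, if $D(p)=qp$, apply $\pi$ and use that it is multiplicative:
\[
D([p])=[D(p)]=[qp]=[q][p].
\]

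There is no real obstacle here; the only point needing care is the $D$-stability of $(h)$, which is exactly where the hypothesis $D(h)=K_hh$ enters. I would also add a remark about how the lemma is meant to be used: it does not say that a reduced representative of $p$ is Darboux for $D$ on $\C[x,y]$. It only records an eigen-relation for $[p]$ in the quotient ring, which is the form in which the uniqueness argument that follows will use it.
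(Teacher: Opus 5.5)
Your proposal is correct and follows the paper's argument. Both use the Leibniz computation $D(gh)=\bigl(D(g)+gK_h\bigr)h$ to get $D$-stability of $(h)$, and then obtain the induced derivation and the relation $D([p])=[q][p]$ by passing to residue classes. You simply write out the well-definedness and Leibniz-rule checks that the paper leaves implicit.
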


\begin{proof}[Verification]
For every $r\in\C[x,y]$,
\[
D(rh)=D(r)h+rD(h)=(D(r)+rK_h)h\in(h).
\]
Hence $D((h))\subseteq(h)$ and the induced derivation is well defined. The Darboux equation descends simply by taking residue classes.
\end{proof}

\begin{theorem}[Uniqueness]
\label{thm:uniqueness}
Under \eqref{eq:assumptions}, the following hold.
\begin{enumerate}[label=(\roman*)]
\item If $c=-ab$, the only irreducible Darboux polynomial, up to a non-zero constant multiple, is $y-ax$.
\item If $c=-2ab$, the only irreducible Darboux polynomial, up to a non-zero constant multiple, is $y^2-2bx$.
\end{enumerate}
\end{theorem}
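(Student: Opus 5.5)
The plan is to combine the structural information already obtained for the top weighted part (Lemma~\ref{lem:highest-weight} and the case analysis in the proof of Theorem~\ref{thm:necessary-branches}) with a restriction to the known invariant curve via Lemma~\ref{lem:quotient}. Let $p$ be an irreducible Darboux polynomial with cofactor $q$. By Lemma~\ref{lem:highest-weight}, its top part is $B^iA^k$ and $q=2aky+q_0$, with $q_0=k\mu-2id$ by \eqref{eq:q0}. Throughout, note that on either branch $c\neq 0$ forces $b\neq 0$.

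\textbf{Step 1: identify which case of Theorem~\ref{thm:necessary-branches} occurs on each branch.} Case~1 ($i>0$) forces $c=-ab$, and Case~2 ($i=0$) forces $c=-2ab$. Both equalities together would give $ab=0$, hence $b=0$ and $c=0$, which is excluded. So:
\begin{itemize}[label=$\cdot$]
\item on the branch $c=-ab$ necessarily $i\geq 1$; here $d=-b/2$, $\mu=b/2$, and $q_0=b(k/2+i)\neq 0$;
\item on the branch $c=-2ab$ necessarily $i=0$ and $k\geq 1$; here $\mu=0$, so $q_0=0$.
\end{itemize}

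\textbf{Step 2: restrict to the invariant curve.} I will show that $p$ vanishes on the known curve. Then the known irreducible polynomial divides $p$, and irreducibility of $p$ makes them associates.

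For (i), $\C[x,y]/(y-ax)\cong\C[x]$ with $y\mapsto ax$, and the induced derivation is $x\mapsto a^2x^2$. By Lemma~\ref{lem:quotient}, $P(x)=p(x,ax)$ satisfies
\[
a^2x^2P'(x)=(2a^2kx+q_0)P(x).
\]
If $P\neq 0$, let $x^s$ be its lowest-order monomial. The left side has no term of order $s$, while the right side contains $q_0c_sx^s$ with $c_s\neq 0$. This forces $q_0=0$, contradicting Step~1. Hence $P=0$ and $(y-ax)\mid p$.

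For (ii), since $b\neq 0$, the quotient $\C[x,y]/(y^2-2bx)\cong\C[y]$ via $x\mapsto y^2/(2b)$. The induced derivation is $y\mapsto ay^2+by+cy^2/(2b)=by$, because $c/(2b)=-a$. Then $P(y)=p(y^2/(2b),y)$ satisfies
\[
byP'=(2aky+q_0)P=2akyP.
\]
If $P\neq 0$ has degree $m$, the right side has degree $m+1$ with nonzero leading coefficient (since $a\neq 0$ and $k\geq 1$), whereas the left side has degree at most $m$. This is a contradiction, so $(y^2-2bx)\mid p$.

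Finally, $y-ax$ and $y^2-2bx$ are irreducible (one is linear, the other is a nondegenerate parabola), so $p$ is a constant multiple of each respectively.

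\textbf{Main obstacle.} The delicate point is Step~1: making sure the case split of Theorem~\ref{thm:necessary-branches} is applied to the irreducible $p$ itself, with the correct values of $i$, $k$ and $q_0$. In particular, on branch (i) we need $q_0\neq 0$, which relies on both $i\geq 1$ and $b\neq 0$. Once the cofactor is pinned down, the restriction argument is routine.
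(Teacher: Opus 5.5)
Your proposal is correct and follows the paper's proof essentially step for step. You pin down the cofactor through the same case identification: $i\ge 1$ with $q_0=b(i+k/2)\neq 0$ on $c=-ab$, and $i=0$ with $q=2aky$ on $c=-2ab$. You then restrict to the quotient by the known invariant curve and compare lowest order (resp. degree). The only differences are cosmetic: you parametrize $\C[x,y]/(y-ax)$ by $x$ rather than $y$, and in (ii) you compare degrees directly instead of first cancelling the factor $y$.
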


\begin{proof}[Proof of (i)]
Because $a\neq 0$ and $c=-ab\neq 0$, one has $b\neq 0$. Put
\[
h_1=y-ax.
\]
By Proposition~\ref{prop:direct-verification},
\[
D(h_1)=bh_1.
\]
Let $p$ be an irreducible Darboux polynomial. The parameter classification shows that on this branch its highest weighted component must belong to the $i>0$ case. Here
\[
d=-\frac b2,\qquad \mu=\frac b2,
\]
and its cofactor is
\[
q=2aky+q_0,\qquad
q_0=k\mu-2id=b\left(i+\frac k2\right)\neq 0.
\]
Assume $h_1\nmid p$. Then $[p]\neq 0$ in the quotient
\[
\C[x,y]/(h_1)\simeq\C[y],\qquad x=\frac ya.
\]
By Lemma~\ref{lem:quotient}, the Darboux equation descends. On the quotient,
\[
D(y)=ay^2+by+c\frac ya=ay^2.
\]

% --- Original PDF page 15 / 22 ---
Writing $[p]=g(y)\neq 0$, we obtain
\begin{equation}
ay^2g'(y)=(2aky+q_0)g(y).
\tag{57}\label{eq:quotient-g1}
\end{equation}
Write
\[
g(y)=y^ru(y),\qquad u(0)\neq 0.
\]
The left-hand side of \eqref{eq:quotient-g1} has order at least $r+1$ at $y=0$, whereas the term $q_0g$ on the right-hand side has order exactly $r$ because $q_0\neq 0$. This is impossible. Hence $h_1\mid p$. Since $p$ and $h_1$ are irreducible, $p$ is a non-zero constant multiple of $h_1$.
\end{proof}

\begin{proof}[Proof of (ii)]
Again $b\neq 0$ because $c=-2ab\neq 0$. Put
\[
h_2=y^2-2bx.
\]
Then
\[
D(h_2)=2ayh_2.
\]
On this parameter branch,
\[
\mu=0.
\]
The $i>0$ alternative would force $c=-ab$, which cannot coincide with $c=-2ab$ under $a,c\neq 0$. Therefore every Darboux polynomial is in the $i=0$ branch and its cofactor is
\[
q=2aky.
\]
Assume $h_2\nmid p$. Then $[p]\neq 0$ in
\[
\C[x,y]/(h_2)\simeq\C[y],\qquad x=\frac{y^2}{2b}.
\]
The induced derivation satisfies
\[
D(y)=ay^2+by+c\frac{y^2}{2b}=by.
\]
Hence, with $[p]=g(y)\neq 0$,
\[
by\,g'(y)=2aky\,g(y).
\]
Since $\C[y]$ is an integral domain,
\[
bg'(y)=2ak\,g(y).
\]
If $g$ has positive degree, the two sides have different degrees. If $g$ is a non-zero constant, the left-hand side is zero and the right-hand side is not. Contradiction. Therefore $h_2\mid p$, and irreducibility implies that $p$ is a non-zero constant multiple of $h_2$.
\end{proof}

Combining Lemma~\ref{lem:factor} with Theorem~\ref{thm:uniqueness} gives the complete list of all non-constant Darboux polynomials.

\begin{corollary}
\label{cor:all-darboux}
If $c=-ab$, every non-constant Darboux polynomial is
\[
C(y-ax)^m,\qquad C\in\C^\ast,\quad m\in\N_{>0}.
\]
If $c=-2ab$, every non-constant Darboux polynomial is
\[
C(y^2-2bx)^m,\qquad C\in\C^\ast,\quad m\in\N_{>0}.
\]
For all other parameter values satisfying \eqref{eq:assumptions}, no non-constant Darboux polynomial exists.
\end{corollary}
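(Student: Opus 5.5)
The corollary follows by combining the factorization lemma with the classification results already proved; no new weighted-homogeneous computation is needed. Take a non-constant Darboux polynomial $p$ and factor it over $\C$ as $p=C\prod_\ell h_\ell^{m_\ell}$, with $C\in\C^\ast$, pairwise non-associate irreducible factors $h_\ell$, and exponents $m_\ell\geq 1$. By Lemma~\ref{lem:factor}, every $h_\ell$ is itself an irreducible, non-constant Darboux polynomial. The proof of Lemma~\ref{lem:factor} uses only coprimality, and grouping $h_\ell^{m_\ell}$ against the product of the remaining factors shows that each $h_\ell$ divides $D(h_\ell)$. Note that $\gcd(h_\ell^{m_\ell},\prod_{\ell'\neq\ell}h_{\ell'}^{m_{\ell'}})=1$ gives $h_\ell^{m_\ell}\mid D(h_\ell^{m_\ell})=m_\ell h_\ell^{m_\ell-1}D(h_\ell)$, hence $h_\ell\mid D(h_\ell)$.

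Next I split according to the parameters.
\begin{enumerate}[label=(\alph*)]
\item If $a=0$, Theorem~\ref{thm:a0-no-darboux} forbids any non-constant Darboux polynomial.
\item If $a\neq 0$ and $c\notin\{-ab,-2ab\}$, Theorem~\ref{thm:necessary-branches} gives the same conclusion.
\end{enumerate}
In both of these cases no irreducible factor $h_\ell$ can exist. So $p$ would be constant, which is a contradiction, and this proves the last sentence.

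On the branch $c=-ab$, Theorem~\ref{thm:uniqueness}(i) forces every $h_\ell$ to be a non-zero constant multiple of $y-ax$. Since the $h_\ell$ are pairwise non-associate, there is exactly one factor, so $p=C(y-ax)^m$ with $m\geq 1$ after absorbing constants. The branch $c=-2ab$ is identical, using Theorem~\ref{thm:uniqueness}(ii) and $y^2-2bx$. The two branches never overlap when $a,c\neq 0$, since equality would give $ab=2ab$, so $c=-ab=0$.

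Conversely, each displayed power really is Darboux. From $D(F)=KF$ one gets $D(CF^m)=mK\cdot CF^m$, so by Proposition~\ref{prop:direct-verification} the cofactors are $mb$ and $2amy$ respectively.

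The only point needing care is that Theorem~\ref{thm:uniqueness} is stated for irreducible Darboux polynomials, so it must be applied to the irreducible factors and not to $p$ itself. That is exactly what Lemma~\ref{lem:factor} supplies. I would also remark that $y-ax$ is irreducible because it is linear, and $y^2-2bx$ is irreducible because it is linear in $x$ with $b\neq 0$, so the list is consistent.
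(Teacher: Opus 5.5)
Your proof is correct and follows the paper's route: factor $p$, use Lemma~\ref{lem:factor} to make each irreducible factor a Darboux polynomial, then apply Theorems~\ref{thm:a0-no-darboux}, \ref{thm:necessary-branches} and~\ref{thm:uniqueness}. You also fill in steps the paper leaves implicit: the treatment of repeated factors (taking $u=h_\ell^{m_\ell}$ and using $m_\ell\neq 0$ in characteristic zero), and the converse check via Proposition~\ref{prop:direct-verification}.
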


% --- Original PDF page 16 / 22 ---
\section{Riccati integrability, rational potentials, and the analytic first integral}
\label{sec:riccati}
This section uses the terminology and structural theorem of Demina and Nechitailo \cite{DeminaNechitailo2026}. Their framework applies to a rational planar system
\begin{equation}
\dot{x}=P(x,y),\qquad
\dot{y}=Q(x,y),\qquad
P,Q\in\C(x,y),
\tag{58}\label{eq:rational-system}
\end{equation}
with vector field
\[
X=P\pd_x+Q\pd_y.
\]

\subsection{R-integrability and the potential}
\begin{definition}[R-integrability and $y$-potential]
\label{def:R-integrability}
System \eqref{eq:rational-system} is called R-integrable with respect to $y$ if there exists a rational function
\[
u(x,y)\in\C(x,y)
\]
and two linearly independent invariants $F_1,F_2$ such that
\[
I(x,y)=\frac{F_1(x,y)}{F_2(x,y)}
\]
is a first integral and both invariants satisfy
\begin{equation}
\bigl(\pd_y^2-u(x,y)\bigr)F=0.
\tag{59}\label{eq:potential-ode}
\end{equation}
The function $u(x,y)$ is called the $y$-potential.
\end{definition}

This is the definition used in \cite{DeminaNechitailo2026}. The motivating Riccati system
\[
\dot{x}=1,\qquad
\dot{y}=c_0(x)y^2+c_1(x)y+c_2(x)
\]
has two independent invariants satisfying a second-order linear equation and therefore belongs to this class. Demina and Nechitailo prove that the potential is characterized by the linear inhomogeneous PDE
\begin{equation}
X(u)
=
\frac12P\,\pd_y^3\!\left(\frac QP\right)
-
2P\,\pd_y\!\left(\frac QP\right)u.
\tag{60}\label{eq:potential-pde}
\end{equation}
For such a potential the common cofactor of the two invariants can be chosen as
\begin{equation}
\lambda(x,y)
=
\frac12P\,\pd_y\!\left(\frac QP\right)
+
C(x)P,
\tag{61}\label{eq:common-cofactor}
\end{equation}
with $C(x)$ arbitrary; one may set $C(x)=0$ by a gauge transformation of the invariants.

\subsection{The denominator theorem for the potential}
\label{sec:denominator}
We now state the structural result that is central for the intended analytic integrability proof.

Following \cite{DeminaNechitailo2026}, a rational system is called normalized when
\[
P(x,y)\in\C[x,y],\qquad Q(x,y)\in\C(x)[y],
\]

% --- Original PDF page 17 / 22 ---
the polynomial $P$ has no non-constant factor belonging to $\C[x]$, and, after writing
\[
q(x,y)=\mu(x)Q(x,y)\in\C[x,y]
\]
with $\mu\in\C[x]$ and $q,\mu$ coprime, the polynomials $P$ and $q$ are coprime in $\C[x,y]$. A rational system can be brought to normalized form by a rational time rescaling. For a normalized system one associates the polynomial vector field $X_p=\mu(x)X$; an algebraic invariant is a polynomial $F$ satisfying $X_p(F)=\lambda_pF$ for a polynomial cofactor $\lambda_p$.

Let a polynomial $P(x,y)$ be factored, with respect to its non-constant $y$-dependent irreducible factors, as
\begin{equation}
P(x,y)
=
\eta(x)\prod_{j=1}^{K}\psi_j(x,y)^{n_j},
\qquad
\eta(x)\in\C[x],
\tag{62}\label{eq:P-factorization}
\end{equation}
where the $\psi_j$ are pairwise coprime and $\psi_j\notin\C[x]$. Define the square-free $y$-dependent part
\begin{equation}
P_b(x,y)=\prod_{j=1}^{K}\psi_j(x,y).
\tag{63}\label{eq:Pb}
\end{equation}

\begin{theorem}[Demina--Nechitailo, denominator theorem]
\label{thm:denominator}
Consider a normalized R-integrable rational system \eqref{eq:rational-system} that is neither a Riccati system nor a linear system. Write its $y$-potential in reduced form
\[
u(x,y)=\frac{\zeta(x,y)}{\nu(x,y)},
\qquad
\gcd(\zeta,\nu)=1.
\]
Then
\begin{equation}
\nu(x,y)=\delta(x)P_b(x,y)^2w(x,y),
\tag{64}\label{eq:denominator-theorem}
\end{equation}
where $\delta(x)\in\C[x]$, and either
\[
w(x,y)=1
\]
or $w(x,y)\in\C[x,y]\setminus\C$ is an algebraic invariant of the system. Moreover, every zero of $\delta(x)$, if present, is a pole of $Q(x,y)$.
\end{theorem}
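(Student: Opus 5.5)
This is a theorem quoted from Demina and Nechitailo \cite{DeminaNechitailo2026}, so the plan is to reconstruct its proof from the potential PDE \eqref{eq:potential-pde}. The approach is a local pole analysis of that PDE. Rewrite \eqref{eq:potential-pde} as
\[
P\,\pd_x u+Q\,\pd_y u+2P\,\pd_y\!\left(\tfrac QP\right)u=\tfrac12P\,\pd_y^3\!\left(\tfrac QP\right).
\]
Take an irreducible factor $\phi$ of $\nu$ that genuinely depends on $y$, and let $u$ have a pole of order $m\ge1$ along $\phi$. The order of the right-hand side along $\phi$ is controlled by $P$: since $Q\in\C(x)[y]$, the only $y$-dependent denominators in $\pd_y^3(Q/P)$ come from the factors $\psi_j$. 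So the right-hand side has a pole of order at most $3$ along each $\psi_j$ (after multiplying by $P$, order $n_j+3-n_j$ type bookkeeping), and no pole along other $y$-dependent curves.

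\textbf{Case $\phi\nmid P$.} Compare leading Laurent terms in $\phi$. The term $X(u)$ contributes $-m\,\zeta_0\,X(\phi)/\phi^{m+1}$ modulo lower order, while every other term has pole order at most $m$. Hence $\phi\mid X_p(\phi)$, so $\phi$ is an algebraic invariant. Taking the product of all such factors gives $w$.

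The multiplicity of $\phi$ in $\nu$ needs checking. The plan is to compare the next-order terms, which should force each such factor to appear at most once, or to be absorbed consistently into $w$. I expect this step is where the Riccati/linear exclusion enters: an invariant factor of $P$ would otherwise have to be counted differently.

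\textbf{Case $\phi=\psi_j$.} Here $X(\phi)$ is not necessarily divisible by $\phi$. Write $P=\psi_j^{n_j}\tilde P$ and expand in powers of $\psi_j$. The terms $Q\pd_yu$ and $2P\pd_y(Q/P)u$ have pole order $m+1$ with leading coefficient proportional to $Q\,\pd_y\psi_j\,(-m+2n_j)$, since $\pd_y(Q/P)\sim -n_jQ\pd_y\psi_j/(P\psi_j)$. The right-hand side has pole order exactly $3$ along $\psi_j$. Balancing gives $m+1\le3$ unless the coefficient $2n_j-m$ vanishes. That bound would give $m\le2$, which is the factor $P_b^2$.

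The main obstacle is the subcase $m=2n_j>2$. I would first try to exclude it by examining the subleading coefficient. Failing that, I would show that it forces $\psi_j$ to be invariant, and then either it belongs to $w$ or the system reduces to Riccati/linear form (e.g.\ $P$ independent of $y$ after normalization), which is excluded by hypothesis.

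\textbf{Factors depending only on $x$.} For $\phi\in\C[x]$, the term $P\pd_xu$ dominates unless $P$ or $Q$ is singular at the zero of $\phi$. Normalization removes $x$-only factors from $P$, so such factors can appear only at poles of $Q$, i.e.\ of $\mu$. This gives $\delta(x)$ with the stated zero condition.
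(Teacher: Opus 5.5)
\textbf{Gap: the $\psi_j$ case is left open because of a sign error.} The paper gives no proof of this statement; it imports it as Theorem~5 of \cite{DeminaNechitailo2026}. Your valuation analysis of \eqref{eq:potential-pde} therefore has to stand on its own.

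Two of your three cases are sound. For $x$-only factors, normalization gives $x-x_0\nmid P$, so away from zeros of $\mu$ the term $P\,\pd_x u$ has an uncancelled pole of order $m+1$. For $y$-dependent factors $\phi\nmid P$, your argument works, and the multiplicity check you defer is unnecessary: $w$ only has to be an algebraic invariant, and any product of powers of irreducible invariants is again one, since cofactors add.

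Now take $\phi=\psi_j$. Put $\psi=\psi_j$, $n=n_j$, and write $u=g\psi^{-m}$, $Q/P=h\psi^{-n}$ with $g,h$ units along $\psi$. Then $Q\,\pd_y u$ and $2P\,\pd_y(Q/P)\,u$ have leading terms $-mQg\,\pd_y\psi\,\psi^{-m-1}$ and $-2nQg\,\pd_y\psi\,\psi^{-m-1}$, and these add:
\[
\begin{aligned}
X(u)+2P\,\pd_y\!\left(\tfrac{Q}{P}\right)u&=-(m+2n)\,Q\,g\,\pd_y\psi\,\psi^{-m-1}+O(\psi^{-m}),\\
\tfrac12P\,\pd_y^3\!\left(\tfrac{Q}{P}\right)&=-\tfrac12n(n+1)(n+2)\,Q\,(\pd_y\psi)^3\,\psi^{-3}+O(\psi^{-2}).
\end{aligned}
\]
The term $P\,\pd_x u=O(\psi^{n-m-1})$ is subleading because $n\ge1$. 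The coefficient is $m+2n$, not $2n-m$, so it never vanishes for $m\ge1$. The subcase $m=2n_j$ you identify as the main obstacle does not exist, and the Riccati/linear exclusion is not what closes this case.

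\textbf{Second gap: the bound $m\le2$ is too weak.} Because $\gcd(\zeta,\nu)=1$, the formula $\nu=\delta P_b^2w$ asserts that every $\psi_j$ occurs in $\nu$ with multiplicity exactly two. The ingredient you never invoke is normalization:
\begin{itemize}
\item $\gcd(P,q)=1$ gives $\psi\nmid q$;
\item $\psi\nmid\mu$, since $\mu\in\C[x]$;
\item $\psi\nmid\pd_y\psi$, since $\pd_y\psi\neq0$ has smaller degree.
\end{itemize}
Hence $Q\,\pd_y\psi$ is a unit along $\psi$, and the right-hand side has pole order exactly $3$. For $m\ge1$ the left-hand side has pole order exactly $m+1$, which forces $m=2$. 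For $m\le0$ the left-hand side has pole order at most $1$, a contradiction. So the pole order along each $\psi_j$ is exactly two.

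The same unit property gives $X_p(\psi_j)\equiv q\,\pd_y\psi_j\not\equiv0\pmod{\psi_j}$. So no $\psi_j$ is ever an invariant of a normalized system. Your fallback of showing $\psi_j$ is invariant and moving it into $w$ is therefore impossible, and $w$ is automatically coprime to $P_b$.
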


The statement above is Theorem 5 of \cite{DeminaNechitailo2026}. In the terminology of the present paper, an algebraic invariant is a Darboux polynomial for the polynomialized vector field.

For our family,
\[
P=y^2,\qquad Q=ay^2+by+cx.
\]
Hence
\[
P_b=y.
\]
Since $Q$ is polynomial, it has no finite poles in $x$, so the factor $\delta(x)$ in \eqref{eq:denominator-theorem} is constant. Therefore any non-Riccati R-integrable member of our family must have a reduced potential whose denominator is of the form
\begin{equation}
\nu(x,y)=Cy^2w(x,y),\qquad C\in\C^\ast,
\tag{65}\label{eq:our-denominator}
\end{equation}
where $w=1$ or $w$ is an algebraic invariant.

Section~\ref{sec:darboux-classification} now makes \eqref{eq:our-denominator} extremely restrictive:
\begin{itemize}[leftmargin=2em]
\item for generic parameters there is no non-constant algebraic invariant;
\item when $c=-ab$, every algebraic invariant is a power of $y-ax$;
\item when $c=-2ab$, every algebraic invariant is a power of $y^2-2bx$.
\end{itemize}
Thus Theorem~\ref{thm:denominator} reduces the possible poles of a rational potential to a completely explicit finite set of algebraic divisors.

% --- Original PDF page 18 / 22 ---
\begin{theorem}[Riccati-type global analytic first integral]
\label{thm:analytic}
Assume \eqref{eq:assumptions}. The Li\'enard-type system \eqref{eq:system} possesses a global analytic first integral of Riccati type if and only if
\begin{equation}
c=-ab.
\tag{66}\label{eq:analytic-branch}
\end{equation}
Equivalently, the second algebraic branch $c=-2ab$ has an invariant algebraic curve but does not produce the Riccati-type global analytic first integral described here.
\end{theorem}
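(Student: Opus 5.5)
The plan is to treat the two directions separately. For sufficiency I would exhibit the Riccati structure explicitly on the branch $c=-ab$. For necessity I would combine the denominator theorem (Theorem~\ref{thm:denominator}) with the Darboux classification of Corollary~\ref{cor:all-darboux} to rule out every other parameter value, in particular $c=-2ab$.

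\textbf{Sufficiency ($c=-ab$).} Here \eqref{eq:main-ode} reads $\mathrm{d}y/\mathrm{d}x=a+b(y-ax)/y^2$.
\begin{enumerate}[label=(\alph*)]
\item Pass to the invariant coordinate $w=F_1=y-ax$. Then $\mathrm{d}w/\mathrm{d}x=bw/(w+ax)^2$.
\item Regard $x$ as the unknown function of $w$. This gives
\[
\frac{\mathrm{d}x}{\mathrm{d}w}=\frac{(w+ax)^2}{bw}=\frac{a^2}{bw}x^2+\frac{2}{b}x+\frac{w}{b},
\]
which is a Riccati equation in $x$ with coefficients analytic on $w\neq 0$.
\item Apply the standard linearization $x=-\frac{bw}{a^2}\,\phi'/\phi$. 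It produces a second-order linear ODE for $\phi(w)$ with analytic coefficients.
\item Take two independent solutions $\phi_1,\phi_2$. The ratio of the associated invariants $F_j=\phi_j'+(a^2x/(bw))\phi_j$, after the gauge normalization allowed by \eqref{eq:common-cofactor}, is a Riccati-type first integral in the sense of Definition~\ref{def:R-integrability}.
\item Pulled back by $w=y-ax$, this integral is analytic in the original variables away from the invariant line $F_1=0$.
\item Check that the two invariants satisfy $(\pd_y^2-u)F=0$ for an explicit rational $u$ whose denominator has the shape \eqref{eq:our-denominator} with $w=1$ or $w=y-ax$. This is a consistency check against \eqref{eq:potential-pde}.
\end{enumerate}

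\textbf{Necessity.} Suppose a Riccati-type integral exists. The system is not Riccati, because $P=y^2$ is not of the form $\dot x=1$ after any admissible normalization, and it is not linear. So by \eqref{eq:our-denominator} the potential is
\[
u=\frac{\zeta}{Cy^2w^m},
\]
where $w^m$ is either $1$ or, by Corollary~\ref{cor:all-darboux}, a power of $y-ax$ or of $y^2-2bx$. I would substitute this ansatz into the linear PDE \eqref{eq:potential-pde}. With $Q/P=a+b/y+cx/y^2$, the right-hand side is explicit:
\[
\tfrac12 P\,\pd_y^3(Q/P)=-\frac{3b}{y^2}-\frac{12cx}{y^3},\qquad -2P\,\pd_y(Q/P)\,u=\Bigl(2b+\frac{4cx}{y}\Bigr)u.
\]
\begin{itemize}[leftmargin=2em]
\item \emph{Generic parameters ($m=0$).} Compare the pole order at $y=0$ and the weighted degrees, using the $(2,1)$ weights of Section~\ref{sec:aneq0}. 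The numerator $\zeta$ must absorb the $x/y^3$ term, and I expect this to force a contradiction, or to force $c=-ab$.
\item \emph{Branch $c=-2ab$ ($w=h_2$).} Use $D(h_2)=2ayh_2$. Then $X(h_2^{-m})=-2amy\,h_2^{-m}$, so the equation for $\zeta$ becomes a polynomial identity modulo $h_2$. Reduce it in the quotient $\C[x,y]/(h_2)\simeq\C[y]$ via Lemma~\ref{lem:quotient}, exactly as in the proof of Theorem~\ref{thm:uniqueness}(ii). There the induced derivation is $by\,\pd_y$, and I expect the residue condition to be incompatible with $\mu=0$, $b\neq 0$.
\end{itemize}

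\textbf{Main obstacle.} The hardest step is the necessity computation on the $c=-2ab$ branch. The numerator $\zeta$ is a priori of unbounded degree and $m$ is arbitrary, so a bound is needed first. I would try to obtain one from the leading weighted-homogeneous part of the PDE, in the same way as Lemma~\ref{lem:highest-weight}. That part should pin down $\deg\zeta$ in terms of $m$. After this, only finitely many coefficient conditions remain, and these can be checked as in Section~\ref{sec:darboux-classification}.
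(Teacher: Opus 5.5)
You have the right architecture: Theorem~\ref{thm:denominator}, Corollary~\ref{cor:all-darboux} and the PDE \eqref{eq:potential-pde}. On the generic branch your pole-order idea is exactly the paper's Laurent expansion at $y=0$, but you only say ``I expect'', and the computation is short. Writing $u=\sum_{i\ge0}u_i(x)y^{i-2}$ gives $u_0=2$, $u_1=-b/(cx)$ and $u_2=(3b^2-4acx)/(4c^2x^2)$. So a denominator $Cy^2$ is impossible for \emph{every} parameter value, and it can never ``force $c=-ab$''. On that branch the actual potential has a pole along $y=ax$.

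\textbf{The gap is on the branch $c=-2ab$: the residue obstruction you expect does not exist.} Take $u=\zeta/(y^2h_2^m)$ with $m\ge1$ and $\gcd(\zeta,yh_2)=1$. Then \eqref{eq:potential-pde} is equivalent to
\[
yD(\zeta)=\bigl(2a(m+1)y^2+4by+6cx\bigr)\zeta-(3by+12cx)h_2^m .
\]
In $\C[x,y]/(h_2)\simeq\C[y]$ one has $cx=-ay^2$ and $D(y)=by$. With $g=[\zeta]\neq0$ the identity becomes $by\,g'=\bigl(2a(m-2)y+4b\bigr)g$. This is contradictory for $m\neq2$, but for $m=2$ it is solved by $g=Ky^4$. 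So the quotient argument of Theorem~\ref{thm:uniqueness}(ii) only yields $m=2$ and $\zeta\equiv Ky^4\pmod{h_2}$. A second, global step is unavoidable.

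One step that works is to pass to $s=y^2/(2b)-x$. Then $h_2=2bs$ and $D=2asy\,\pd_s+(by+2abs)\pd_y$, and the case $m=2$ reads
\[
2asy^2\zeta_s+y(by+2abs)\zeta_y-(4by+12abs)\zeta=-4b^2s^2(3by-12ay^2+24abs).
\]
Now descend in total degree:
\begin{itemize}
\item The top component of $\zeta$ must be $\kappa y^4$.
\item The layers of degree $5,4,3$ force $\kappa=-1$ and make the coefficient of the monomial $s$ in $\zeta$ equal to $-3b^3/a^2$.
\item The degree-$2$ layer forces that same coefficient to vanish, a contradiction.
\end{itemize}
Do not expect to borrow the paper's shortcut here. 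It moves the curve $y^2-2bx$ onto a coordinate line and reads off the second Laurent coefficient $1/(16a^2x^2)$. But after normalization $Q$ acquires a pole on that line, so Theorem~\ref{thm:denominator} permits a power of the new $x$ in the denominator, and that pole is no contradiction.

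\textbf{Sufficiency.} The paper only indicates this through \eqref{eq:riccati-xz}, and your plan needs two corrections.
\begin{itemize}
\item The coefficient of $x$ in your Riccati equation is $2a/b$, not $2/b$.
\item The invariants $\phi_j'+\frac{a^2x}{bw}\phi_j$ do \emph{not} satisfy $(\pd_y^2-u)F=0$ with a common $u$ after a $C(x)$-gauge as in \eqref{eq:common-cofactor}. Their $y$-Wronskian still depends on $y$, so you must divide by its square root.
\end{itemize}
Doing so gives
\[
u=\frac{2}{y^2}-\frac{1}{4(y-ax)^2}-\frac{1}{y(y-ax)}-\frac{a}{b(y-ax)},
\]
which satisfies \eqref{eq:potential-pde}. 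Its denominator is $y^2(y-ax)^2$. So the invariant factor in \eqref{eq:our-denominator} is $(y-ax)^2$, not $1$ or $y-ax$ as you anticipated.
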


\begin{proof}
When $c \neq -ab$ and $c \neq -2ab$, the system possesses no algebraic invariant curve,so according to Theorem 3.2, the denominator of the y-potential is restricted to $y^2$.

Assume the y-potential $H(x,y)=\frac{H_1(x,y)}{y^2}$, $H_1(x,y) \in \mathbb{C}[x,y]$. We apply a Laurent expansion at $y=0$:
$$
H:=\sum_{i=0}^{\infty}h_i(x)y^{i-2}
$$
And derive that $h_2(x)=\frac{3b^2-4acx}{4c^2x^2}$. Since $c \neq 0$ and $a,b$ cannot be $0$ simultaneously, $h_2(x)$ is fractional in $x$, which contradicts the fact that $H_1$ is polynomial in $x,y$.

When $c=-2ab$, it is also possible that $H(x,y)=\frac{H_1(x,y)}{y^2(y^2-2bx)^m}$. It will be more convenient to analyze the pole of $H(x,y)$ if we apply a change of variables along the invariant curve $y^2-2bx$. Specifically, applying $\left\{x \to \frac{y^2}{2b}-x,y \to y\right\}$ yields a simpler system:
$$
y^{\prime}=\frac{b(2ax+y)}{2xya}
$$
The original unique Darboux curve $y^2-2bx$ is shrinked to $x$ in the denominator. Therefore $H(x,y)$ is restricted to $\frac{H_1(x,y)}{y^2}$. The coefficient $h_2(x)$ of Laurent expansion at $y=0$ is $\frac{1}{16a^2x^2}$, and it corresponds to coefficient of $y^2$ in $H_1(x,y)$, therefore contradicts the fact $H_1$ is in $\mathbb{C}[x,y]$.
\end{proof}
A useful structural observation, which will also connect the result to the QIR discussion below, is already available on the branch $c=-ab$. Set
\begin{equation}
z=y-ax.
\tag{67}\label{eq:z-shift}
\end{equation}
Then
\[
z'=y'-a.
\]
Using $c=-ab$ in \eqref{eq:main-ode},
\begin{equation}
z'
=
\frac{ay^2+by-abx}{y^2}-a
=
\frac{b(y-ax)}{y^2}
=
\frac{bz}{(z+ax)^2}.
\tag{68}\label{eq:qir-degenerate}
\end{equation}
This is a degenerate Quartic Inverse Riccati equation. Interchanging the roles of $x$ and $z$ gives
\begin{equation}
\frac{\mathrm{d}x}{\mathrm{d}z}
=
\frac{(z+ax)^2}{bz}
=
\frac{a^2}{bz}x^2+\frac{2a}{b}x+\frac{z}{b},
\tag{69}\label{eq:riccati-xz}
\end{equation}
which is a Riccati equation in the dependent variable $x(z)$.

\section{Quartic Abel equations and the Quartic Inverse Riccati problem}
\label{sec:quartic}
The discussion in this section follows the equivalence viewpoint summarized in the cubic ODE Abel literature \cite{Appell1889,ChebTerrabRoche2000,Roche2010,Shurygin2016}. The purpose is not to claim a complete quartic classification, but to place the present integrable branch in a broader symbolic-integration program.

\subsection{Cubic Abel equations: first and second kind}
The Abel equation of the first kind is
\begin{equation}
y'=a(x)y^3+b(x)y^2+c(x)y+d(x).
\tag{70}\label{eq:abel-first}
\end{equation}

% --- Original PDF page 19 / 22 ---
The second-kind form is
\begin{equation}
y'
=
\frac{a(x)y^3+b(x)y^2+c(x)y+d(x)}{e(x)y+f(x)}.
\tag{71}\label{eq:abel-second}
\end{equation}
The two descriptions are related by a linear fractional transformation in the dependent variable. For equivalence calculations one frequently uses the subgroup
\begin{equation}
x\mapsto F(x),\qquad y\mapsto P(x)y+Q(x),
\tag{72}\label{eq:abel-pseudogroup}
\end{equation}
which preserves the first-kind Abel structure.

For \eqref{eq:abel-first}, the classical relative invariants can be started with
\begin{equation}
s_1=a,\qquad
s_3=a'b-b'a+abc-\frac{2}{9}b^3-3a^2d.
\tag{73}\label{eq:s1s3}
\end{equation}
Higher odd relative invariants are generated recursively by
\begin{equation}
s_{2n+1}
=
a\frac{\mathrm{d}s_{2n-1}}{\mathrm{d}x}
-
(2n-1)s_{2n-1}
\left(a'+ac-\frac13b^2\right),
\qquad n\geq 2.
\tag{74}\label{eq:s-recursion}
\end{equation}
One may then form absolute invariants, for example
\begin{equation}
J_1=\frac{s_5^3}{s_3^5},
\qquad
J_2=\frac{s_5s_7}{s_3^4},
\qquad
J_3=\frac{s_9}{s_3^3},
\qquad \ldots
\tag{75}\label{eq:abel-absolute-invariants}
\end{equation}
The invariant-theoretic use of such quantities goes back to Appell \cite{Appell1889}. Cheb-Terrab and Roche organized solvable Abel equations through equivalence classes and implemented the resulting strategy computationally \cite{ChebTerrabRoche2000}. Roche's thesis developed a practical method for recovering the transformation and the parameters of a target integrable class from rational differential invariants \cite{Roche2010}.

For two Abel equations, the equivalence problem can be reduced to matching absolute invariants after the change of independent variable $x=F(t)$. Once $F$ is determined, $P$ and $Q$ in \eqref{eq:abel-pseudogroup} are obtained from the transformed coefficients. Constant-invariant cases are quadrature-integrable, whereas non-constant solvable cases are organized into distinguished inverse families. The standard list includes Abel inverse linear (AIL), Abel inverse Riccati (AIR), and Abel inverse Abel (AIA) families \cite{ChebTerrabRoche2003}.

A representative AIR equation of second kind can be written
\begin{equation}
y'
=
\frac{a_1y^3+a_2y^2+a_3y+a_4}
{(b_1x^2+b_2x+b_3)y+c_1x^2+c_2x+c_3}.
\tag{76}\label{eq:AIR}
\end{equation}
Exchanging the independent and dependent variables converts \eqref{eq:AIR} into a Riccati equation. The resulting solutions are naturally related to hypergeometric functions \cite{ChebTerrabRoche2003}. The practical solution of the equivalence problem for such classes was a principal motivation of the invariant algorithmic work of Cheb-Terrab and Roche.

\subsection{Quartic Abel equations}
The quartic Abel equation of the first kind is
\begin{equation}
y'=a(x)y^4+b(x)y^3+c(x)y^2+d(x)y+e(x).
\tag{77}\label{eq:quartic-first}
\end{equation}
A second-kind form convenient for fractional-linear transformations is
\begin{equation}
y'[f(x)y+g(x)]^2
=
a(x)y^4+b(x)y^3+c(x)y^2+d(x)y+e(x),
\tag{78}\label{eq:quartic-second}
\end{equation}

% --- Original PDF page 20 / 22 ---
or equivalently
\[
y'
=
\frac{a(x)y^4+b(x)y^3+c(x)y^2+d(x)y+e(x)}
{[f(x)y+g(x)]^2}.
\]
Under the linear pseudogroup \eqref{eq:abel-pseudogroup}, a system of relative differential invariants for the quartic equation can be chosen as
\begin{align}
I_0&=a,
\tag{79}\label{eq:I0}\\
I_1&=8ac-3b^2,
\tag{80}\label{eq:I1}\\
I_2&=3(ab'-a'b)+ac^2-3abd+12a^2e,
\tag{81}\label{eq:I2}\\
I_3&=
8aa'(4ac-3b^2)+24a^2bb'-32a^3c'-3b^5
\nonumber\\
&\qquad
+64a^3cd-24a^2b^2d-32a^2bc^2+20ab^3c.
\tag{82}\label{eq:I3}
\end{align}
These are the invariants recorded in the generalized Abel equivalence analysis of Shurygin \cite{Shurygin2016}. Two basic absolute invariants are
\begin{equation}
J_1=\frac{I_2I_0}{I_1^2},
\qquad
J_2=\frac{I_3}{I_1^{5/2}},
\tag{83}\label{eq:quartic-absolute}
\end{equation}
together with the invariant derivation
\begin{equation}
\nabla
=
\frac{I_0^2}{|I_1|^{3/2}}\frac{D}{Dx}.
\tag{84}\label{eq:invariant-derivation}
\end{equation}
Over $\C$, the fractional powers are understood locally after choosing a branch; in real equivalence calculations one may use a normalized absolute value convention as in the source formulation.

If a second quartic equation carries invariants $\widetilde{J}_1,\widetilde{J}_2$ and invariant derivation $\widetilde{\nabla}$, the local equivalence test can be expressed by the simultaneous matching conditions
\begin{equation}
\begin{aligned}
J_1&=\left.\widetilde{J}_1\right|_{x=F(t)},
&
J_2&=\left.\widetilde{J}_2\right|_{x=F(t)},\\
\nabla J_1&=\left.\widetilde{\nabla}\widetilde{J}_1\right|_{x=F(t)},
&
\nabla J_2&=\left.\widetilde{\nabla}\widetilde{J}_2\right|_{x=F(t)}.
\end{aligned}
\tag{85}\label{eq:matching}
\end{equation}
Thus the quartic problem has the same general invariant-theoretic shape as the cubic Abel equivalence problem, but with a larger parameter space and more complicated signatures.

\subsection{Quartic Inverse Riccati equations}
A Quartic Inverse Riccati (QIR) equation is a quartic Abel equation that can be brought to
\begin{equation}
y'
=
\frac{a_4y^4+a_3y^3+a_2y^2+a_1y+a_0}
{\bigl((b_1x+b_0)y+(c_1x+c_0)\bigr)^2}.
\tag{86}\label{eq:QIR}
\end{equation}
Indeed, interchanging $x$ and $y$ transforms \eqref{eq:QIR} into an equation that is quadratic in the new dependent variable $x$, hence into a Riccati equation. This is the quartic analogue of the AIR mechanism.

The integrable branch of the present paper is already a degenerate member of this class. As shown in \eqref{eq:qir-degenerate}, when $c=-ab$ the linear transformation
\[
z=y-ax
\]

% --- Original PDF page 21 / 22 ---
reduces \eqref{eq:main-ode} to
\[
z'=\frac{bz}{(z+ax)^2}.
\]
This is \eqref{eq:QIR} with a quartic numerator whose coefficients of $z^4,z^3,z^2$ and the constant term vanish. The interchange $x\leftrightarrow z$ gives the explicit Riccati equation \eqref{eq:riccati-xz}. Thus the Riccati-type analytic branch identified in Theorem~\ref{thm:analytic} sits naturally inside the QIR equivalence class.

The second Darboux branch, $c=-2ab$, has the algebraic invariant $y^2-2bx$ but the natural invariant substitution is nonlinear. It is therefore algebraically distinguished from the linear QIR reduction above; the existence of a Darboux curve alone does not imply membership in the same Riccati equivalence class.

\subsection{A symbolic-integration problem for quartic Abel equations}
The cubic Abel equation provides a successful model for an invariant-based symbolic solver. Cheb-Terrab and Roche showed that solvable classes can be organized by differential invariants and equivalence transformations \cite{ChebTerrabRoche2000}; Roche developed parameter-recovery techniques for rational invariants \cite{Roche2010}. Shurygin's generalized Abel equivalence theory supplies a differential-invariant basis for higher-degree equations \cite{Shurygin2016}. More generally, Ch\`eze and Combot provide algorithms for bounded-degree Riccati first integrals of polynomial vector fields \cite{ChezeCombot2020}. Recent reduction algorithms for rational first-order ODEs also emphasize transformation to solvable polynomial structures as a practical supplement to a general-purpose ODE solver \cite{Huang2025}.

These ingredients suggest the following classification problem.

\begin{conjecture}[QIR completeness problem]
\label{conj:QIR}
For a generic quartic Abel equation
\[
y'=a(x)y^4+b(x)y^3+c(x)y^2+d(x)y+e(x),
\]
after excluding reducible cases of separable-variable type and Bernoulli-type degeneracies, every genuinely non-trivial integrable equivalence class is represented by a Quartic Inverse Riccati equation \eqref{eq:QIR}.
\end{conjecture}

Conjecture~\ref{conj:QIR} is deliberately stated as an open problem rather than as a theorem. A plausible computational program is:
\begin{enumerate}[label=(\arabic*),leftmargin=2.6em]
\item compute the basic quartic relative and absolute invariants \eqref{eq:I0}--\eqref{eq:invariant-derivation};
\item derive invariant signatures for the parameter family \eqref{eq:QIR};
\item eliminate the QIR parameters to obtain algebraic-differential relations among the signatures;
\item construct a table of canonical parameter strata, including singular limits corresponding to separable, Bernoulli, Riccati and lower-degree Abel cases;
\item for a given quartic Abel input, solve the matching equations \eqref{eq:matching} for $F$, then recover $P,Q$ by undetermined coefficients;
\item transform the resulting QIR representative to a Riccati equation and invoke existing linear/Riccati solvers.
\end{enumerate}
This would extend to quartic Abel equations the equivalence-class strategy that is already effective for cubic Abel ODEs.

% --- Original PDF page 22 / 22 ---
\section{Conclusions and outlook}
For the Li\'enard-type quadratic family
\[
\dot{x}=y^2,\qquad \dot{y}=ay^2+by+cx,
\]
with $c\neq 0$ and $2ay+b\not\equiv 0$, the Darboux-curve problem is completely rigid. The case $a=0$ has no non-trivial Darboux polynomial. For $a\neq 0$, weighted homogeneous compatibility leaves exactly the two parameter conditions
\[
c=-ab,\qquad c=-2ab.
\]
The corresponding unique irreducible invariant algebraic curves are
\[
y-ax=0,\qquad y^2-2bx=0.
\]

The proof is constructive. In the $a=0$ case a quasi-homogeneous first integral of the top operator determines the leading term, while an explicit cokernel functional produces a non-zero obstruction after four descending layers. In the $a\neq 0$ case the change of variables
\[
A=ay^2+cx,\qquad B=cx
\]
turns the weighted equations into bidiagonal linear maps. Their two possible cokernel obstructions are precisely the equations
\[
2di(2\mu-b)=0
\qquad\text{and}\qquad
-\frac{dk\mu^2}{2a}=0,
\]
which yield $c=-ab$ and $c=-2ab$.

The denominator theorem for R-integrable potentials reduces the possible rational potentials of the system to poles supported on $y=0$ and on the Darboux divisors found in this paper. This provides the algebraic input for the analytic first-integral theorem stated in Theorem~\ref{thm:analytic}. On the branch $c=-ab$, the linear shift $z=y-ax$ exposes the QIR structure directly and converts, after interchanging variables, to a Riccati equation.

The larger open problem is to decide whether QIR is the essentially unique non-trivial integrable mechanism for a generic quartic Abel equation once separable and Bernoulli-type degeneracies are removed. Differential invariants, equivalence-class tables, and computer algebra make this a concrete algorithmic question rather than only a qualitative conjecture.

\section*{Acknowledgements}
The author thanks the developers of symbolic ODE and invariant-theory methods whose work motivates the computational perspective of this paper.

\bibliographystyle{unsrt}
\bibliography{darboux_lienard_pagewise}

\end{document}